\documentclass[sigconf]{}

\newtheorem{thm}{Theorem}

\newtheorem{obs}{Observation}

\AtBeginDocument{%
  \providecommand\BibTeX{{%
    \normalfont B\kern-0.5em{\scshape i\kern-0.25em b}\kern-0.8em\TeX}}}

\setcopyright{acmcopyright}
\copyrightyear{2018}

\acmConference[KDD 2020]{The fifth international workshop on fashion and KDD in conjunction with 
The 26th ACM SIGKDD Conference on Knowledge Discovery and Data Mining}{August 23 - 27, 2020}{San Diego, CA}



\begin{document}

\title{Breaking Moravec's Paradox: Visual-Based Distribution in Smart Fashion Retail}


\author{Shin Woong Sung}
\authornotemark[2]
\authornote{Shin Woong Sung is now in Samsung Electronics}
\email{sw.sung@kaist.ac.kr}
\author{Hyunsuk Baek}
\authornote{Both authors contributed equally to this research}
\email{hisuk31@kaist.ac.kr}
\affiliation{%
  \institution{Korea Advanced Institute of Science and Technology}
  \city{Daejeon}
  \country{Republic of Korea}
}

\author{Hyeonjun Sim}
\email{flhy5836@kaist.ac.kr}
\affiliation{%
  \institution{Korea Advanced Institute of Science and Technology}
  \city{Daejeon}
  \country{Republic of Korea}
}

\author{Eun Hie Kim}
\affiliation{%
 \institution{KOLON Corporation}
 \country{Republic of Korea}}

\author{Hyunwoo Hwangbo}
\affiliation{%
  \institution{KOLON Corporation}
  \country{Republic of Korea}}

\author{Young Jae Jang}
\authornote{Corresponding author}
\email{yjang@kaist.ac.kr}
\affiliation{%
  \institution{Korea Advanced Institute of Science and Technology}
  \city{Daejeon}
  \country{Republic of Korea}
  }

\renewcommand{\shortauthors}{Sung and Baek, et al.}

\begin{abstract}
In this paper, we report an industry-academia collaborative study on the distribution method of fashion products using an artificial intelligence (AI) technique combined with an optimization method.
To meet the current fashion trend of short product lifetimes and an increasing variety of styles, the company produces limited volumes of a large variety of styles. However, due to the limited volume of each style, some styles may not be distributed to some off-line stores. As a result, this high-variety, low-volume strategy presents another challenge to distribution managers.
We collaborated with KOLON F/C, one of the largest fashion business units in South Korea, to develop models and an algorithm to optimally distribute the products to the stores based on the visual images of the products. The team developed a deep learning model that effectively represents the styles of clothes based on their visual image. Moreover, the team created an optimization model that effectively determines the product mix for each store based on the image representation of clothes.
In the past, computers were only considered to be useful for conducting logical calculations, and visual perception and cognition were considered to be difficult computational tasks.
The proposed approach is significant in that it uses both AI (perception and cognition) and mathematical optimization (logical calculation) to address a practical supply chain problem, which is why the study was called ``Breaking Moravec's Paradox.''
\end{abstract}

\begin{CCSXML}
<ccs2012>
<concept>
<concept_id>10010147.10010178.10010224.10010240.10010241</concept_id>
<concept_desc>Computing methodologies~Image representations</concept_desc>
<concept_significance>500</concept_significance>
</concept>
<concept>
<concept_id>10010405.10010481.10010482.10003259</concept_id>
<concept_desc>Applied computing~Supply chain management</concept_desc>
<concept_significance>500</concept_significance>
</concept>
</ccs2012>
\end{CCSXML}

\ccsdesc[500]{Computing methodologies~Image representations}
\ccsdesc[500]{Applied computing~Supply chain management}

\keywords{Deep Learning, Supply Chain, Image Embedding, Distribution Problem, Optimization}


\maketitle

\section{Introduction}
KOLON Fashion and Culture (KOLON F/C), one of the top three fashion retailers in South Korea, oversees more than 10 fashion brands, and its signature brand KOLON Sport is one of the most popular outdoor brands in South Korea. Since it first began operating in the 1950s, a core value of KOLON F/C has been to understand its customers' needs by maintaining close relationships with its customer base. This core value has become the key philosophy of the firm's business operations. Then, KOLON F/C continues to operate small boutique type stores in many different locations. In addition to selling products, the store managers have been trained to build long-term relationships with their customers. This is one of the reasons the company has survived in business for more than 60 years.
Moreover, although the company has been expanding its sales in on-line channels, because the company values the direct interaction with customers, the off-line businesses continue to play the central role in the company's overall business operations.

In response to the ongoing advances in IT, which have driven constant and varying customer demand, fashion retailers have begun selling smaller volumes of a diverse range of products to accommodate the short selling seasons in which customer demand needs to be quickly and accurately satisfied \citep{bhardwaj2010fast}.
To respond to the changing business environment in the fashion retail industry, KOLON F/C has been aggressively adopting advanced information technologies and innovative operational concepts. In early 2015, KOLON F/C teamed up with the Korea Advanced Institute of Science and Technology (KAIST) and initiated the \textit{Smart Fashion Retail} project to create and develop advanced solutions for fashion retail operations.
The mission of the Smart Fashion Retail project is as follows:
\begin{enumerate}
\item	to improve the operation of KOLON F/C's off-line stores while maintaining the customer-centric philosophy, and
\item	to identify operations that produce unnecessary waste and solve the problems using advanced technologies.
\end{enumerate}
The team's ultimate goal is to upgrade the company's current ``Smart" operations, processes, and structures by creating a platform in which people can collaborate with machines and algorithms in the ``Second Machine Age" \citep{brynjolfsson2014second}. The team has designated the mission: ``Breaking \textit{Moravec's Paradox}." The mission stems from the well-known \textit{Moravec's Paradox} proposed by Hans Moravec and others in the 1980s, which states that \textit{it is comparatively easy to make computers exhibit adult level performance on intelligence tests or playing checkers, and difficult or impossible to give them the skills of a one-year-old when it comes to perception and mobility.}

The essence of the paradox is that while formal mathematical problems are difficult for humans to solve but relatively straightforward for computers, the problems that humans can solve intuitively, such as recognizing faces, are a constant challenge for computers.
However, the recently advanced AI algorithms, such as deep learning (DL), have made the computer processing of perceptual and cognitive information as reliable as human-level by learning with GPU and the large amount of data.

Visual information is particularly critical in the fashion business. The visual information about fashion items, such as clothes or footwear, is the key feature of the products.
However, it is very difficult to articulate the identification and quantification of these features in a formal manner. As a result, these features are not systematically used in most business operations. For instance, when products are distributed to KOLON F/C's off-line stores, it is important to send a wide variety of products to each store. In the past, product and distribution managers would select the product mix by manually inspecting each product. However, these naked-eye visual inspection and manual mix match processes are very subjective and time consuming. Moreover, selecting the product mix for specific stores while meeting the distribution requirements, such as \textit{``at least five different styles of T-shirts need to be distributed to the store in this area,"} involves not only the cognitive task of identifying the visual image, but also the conventional optimization of the logistical distribution.

The proposed study, performed in 2017, has applied the advanced AI-based approaches to the visual identification and representation of fashion products, and used the conventional optimization modeling techniques for the logical distribution. That is, the team has combined low level perception (image processing) and high level intelligence (optimization) technologies, which is why the paper is named ``Breaking \textit{Moravec's Paradox}." This was the last work of the project.

The remainder of this paper is organized as follows. Section \ref{sec:problem-scope} describes the project scope and the terms and definitions of the problem. Section \ref{sec:visual-similarity} explains the DL model used to quantify the visual similarities between styles. Section \ref{sec:optimization} presents the distribution optimization model based on the recognition of visual similarity. Section \ref{sec:result} provides an illustrative example of the optimal distribution. Section \ref{sec:conclusion} describes the implementation of the model and its business impact, and concludes by outlining the business and academic implications of the project.

\section{Problem, terms, and definitions}
\label{sec:problem-scope}

The Smart Retail team primarily focused on the distribution process for the off-line KOLON F/C stores because it was considered to be a key area in which significant improvements could be made.
KOLON F/C operates two central warehouses in South Korea, which are responsible for distributing products to hundreds of different off-line stores in the domestic market.
KOLON has a central distribution-inventory control policy whereby the distribution decisions in terms of which products are to be sent to which stores for the season are made by the distribution management team at the headquarters of each brand. The headquarters collect any inventory left over at the end of the season, which they sell to secondary markets or scrap. That is, the central headquarters is responsible for the distribution decisions and costs of any remaining inventory.
Therefore, the project team estimated that improving the distribution/sales processes would directly impact the overall operations of the brands.

\begin{figure}
\centering
\includegraphics[width=8.5cm]{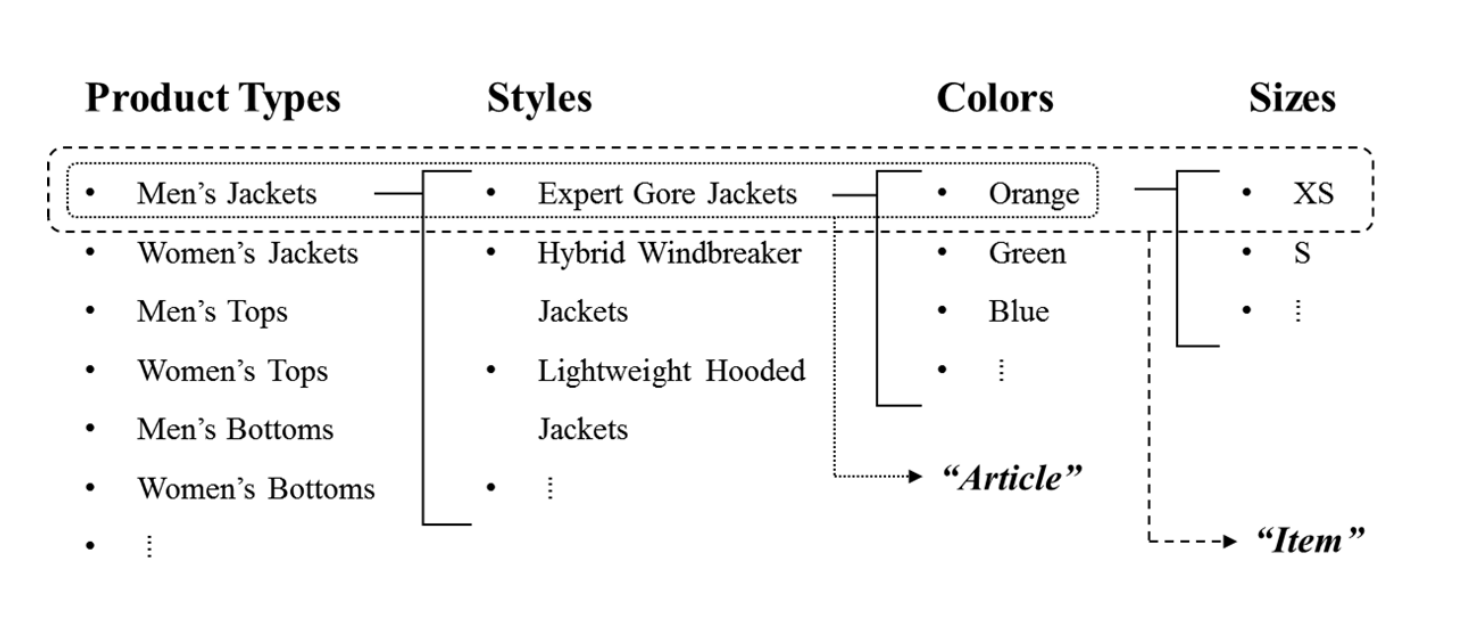}
\caption{The highest level of the product hierarchy is the \textit{product type}, such as men's t-shirts and women's bottoms. Each product type is broken down into \textit{styles}; for example, men's t-shirts include short-sleeve round-neck as a style. The styles are further broken down into \textit{colors}. A unit of a product line with a particular product type, style, and color is called an \textit{article}. Each article is further broken down into individual \textit{items}, distinguished by \textit{size}. Each product is then distinguished by the item level with its unique product type, style, color, and size.}
\label{fig:product-hierarchy}
\end{figure}

The recent fashion trends have also required companies to develop innovative distribution/sales processes.
As mentioned in the previous section, maintaining long-term customer relationships by interacting with the customer base in the off-line stores has been a core value of the KOLON F/C's operations.  However, the company is facing new challenges in its off-line operations.
Due to the current fast changing fashion trends, KOLON F/C needs to introduce many different styles of products at an unprecedented rate.
Note that KOLON F/C has a specific product hierarchy, as described in Figure \ref{fig:product-hierarchy}.
To meet this new fashion trend, KOLON F/C has also increased the number of new product styles.

Specifically, each season, hundreds of new products need to be introduced and distributed to each store. To meet the demands of this fast changing fashion trend, KOLON F/C has begun to introduce low volumes of many different styles of clothing. However, distributing high-variety, low-volume products to small boutique stores in many different locations has demanded a new system of distribution.
That is, each store only receives a limited range of the styles in each product type. Thus, the distribution managers need to provide an even mix of different styles. Note that the definition of the term ``different styles" is very abstract and is not easy to articulate in a logical form. Although it includes color, shape, and texture, it clearly refers to more than that.

\begin{figure}
    \centering
    \includegraphics[width=8.5cm]{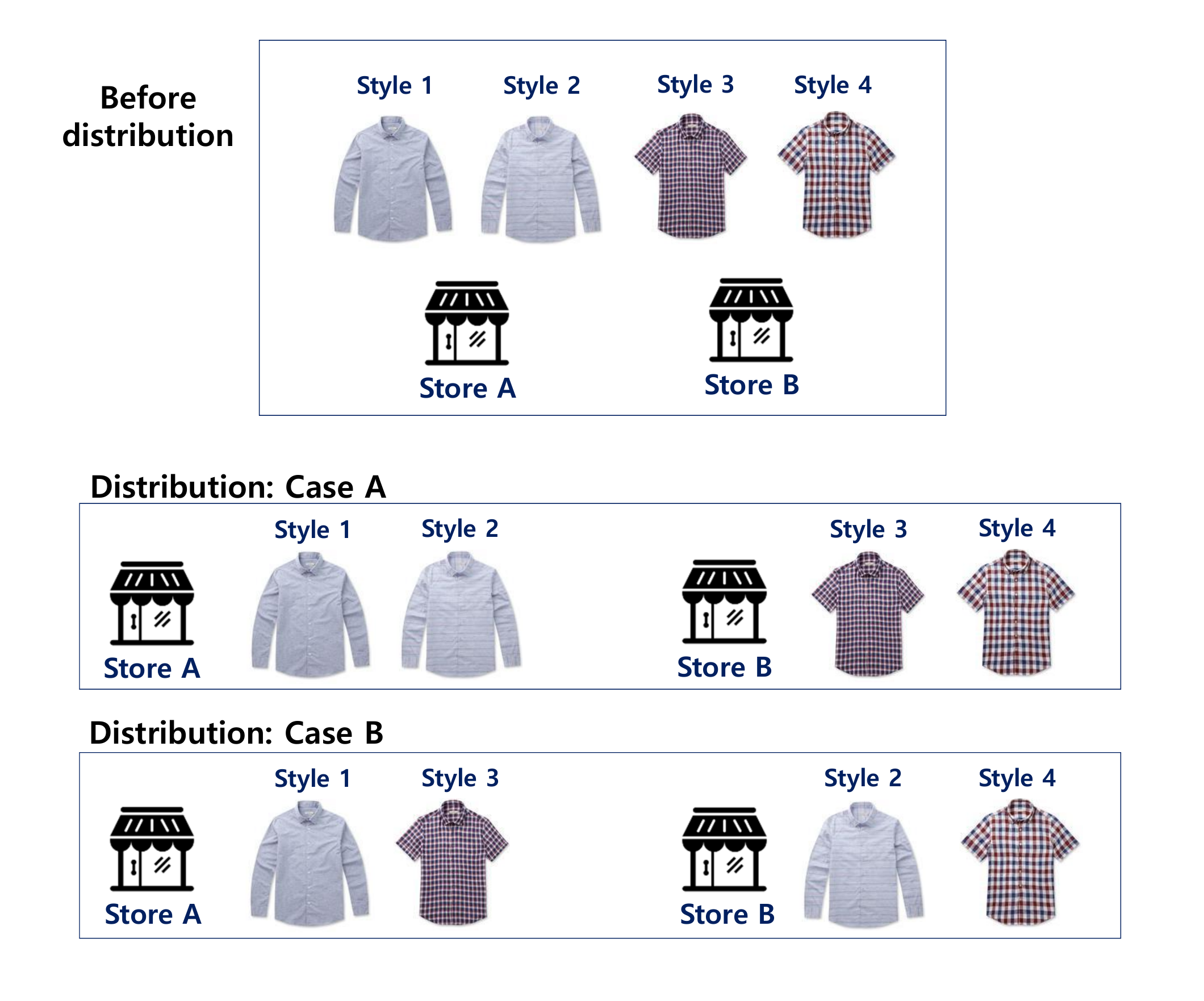}
    \caption{An illustrative example of the visual perception based distribution system in which four styles need to be distributed to two stores, and due to the limited quantity, each store can only receive two styles}
    \label{fig:distribution-problem-1}
\end{figure}

The distribution of an adequate style mix is critical. Figure \ref{fig:distribution-problem-1} presents an illustrative example of the distribution process. Suppose that four different styles need to be distributed to two stores (top panel). Due to the limited quantity of items available, only two different styles can be distributed to each store. In this example, the distribution in case B is preferable to that in case A. It is difficult to know in advance the product preferences of the customers who will visit the different stores. As a consequence, it is recommended to distribute as wide a variety of products to each store as possible. Therefore, case B is considered a better distribution because more heterogeneous mix of products is 
shown in case B than in case A.

The systematic distribution of a heterogeneous style mix presents two challenges. First, the characteristics of each style need to be identified. We call this process \textit{characteristic identification}. Second, the similarities and differences among the identified characteristics need to be quantified. We call this process the \textit{similarity measure}.

The distribution managers are required to distribute a heterogeneous style mix. Accordingly, the managers have tried to create style codes consisting of style categories and their attributes to identify the style characteristics of the products. For instance, the style of a T-shirt is defined according to the following categories: the length of the sleeves, shape of the neck collar, color pattern, main color, and number of buttons on the collar. Thus, a style code of (``short,'' ``round,'' ``solid,'' ``blue,'' ``2'') represents a T-shirt with short sleeves, a round neck, solid blue color, and two buttons.
However, the designers and product managers need to visually inspect the items, which has created numerous problems.
Due to the nature of the fashion products, it is not easy to map specific styles to specific codes. For example, suppose that a T-shirt has a check color pattern on the collar and sold color on the main body part. Should this T-shirt be classified as having a check or solid pattern? Moreover, there are thousands of possible colors and an infinite number of potential color patterns. Thus, it is impossible to classify each color or pattern in terms of finite attributes.
Moreover, KOLON F/C normally introduces more than 100 different styles in each product type each season, which makes it extremely cumbersome to manually define the attributes in each category with visual inspection.

Another problem is determining the similarities between different codes. Is a (``red,'' ``long sleeve,'' ``V-neck'') shirt more similar to a (``blue,'' ``long sleeve,'' ``round neck'') or an (``ocean blue,'' ``short sleeve,'' ``round neck'')? This question is very difficult to answer.

The manual code classification approach has not helped managers achieve the distribution goals and has generated many complaints from the store managers. For instance, one of the store managers the product team interviewed complained that ``My store received too many reddish jackets this season! We need more variety in colors.'' This issue has been known for years and managers have also known that the ad-hoc process has resulted in significant losses of sales and high inventory imbalance costs.
The project team decided to resolve this problem by integrating DL-based image representation technology with an optimization approach.
Specifically, the DL technology is used to identify the style characteristics and the optimization approach is used to determine the distribution of products considering the similarity measures.

Despite its practical importance, little progress has been made in addressing the distribution problem in the fashion and apparel industry. \cite{caro2010zara} developed a fashion distribution model for the global fashion brand
Zara, which incorporates the demand model and inventory replenishment model during the selling
season and considers the store policies in reducing the stocks for display. \cite{gallien2015initial} considered the initial shipment decision in the Zara distribution model while updating the demand forecast using a data-driven approach. Our project team also examined market forecasting, inventory optimization, and box packaging optimization, such as \citep{tiwari2007asp} and \citep{woong2017business}. However, these approaches do not consider the visual image or characteristics of the products in the distribution process. A number of studies have examined the use of visual image recognition and deep neural network based characteristic identification in a fashion context, such as \citep{bracher2016fashion}, \citep{hadi2015buy}, and \citep{zoghbi2016fashion}. However, these studies focus on the use of machine vision and network design for product recommendations rather than distribution. To the best of our knowledge, our paper is the first to apply DL image recognition to the distribution problem in the fashion industry. Accordingly, this paper focuses on the problem of off-line distribution using image embedding and optimization.

\section{Perception and cognition: quantifying visual similarity} \label{sec:visual-similarity}

\subsection{Kolon Smart Net}

\begin{figure}
    \centerline{\includegraphics[width=13cm]{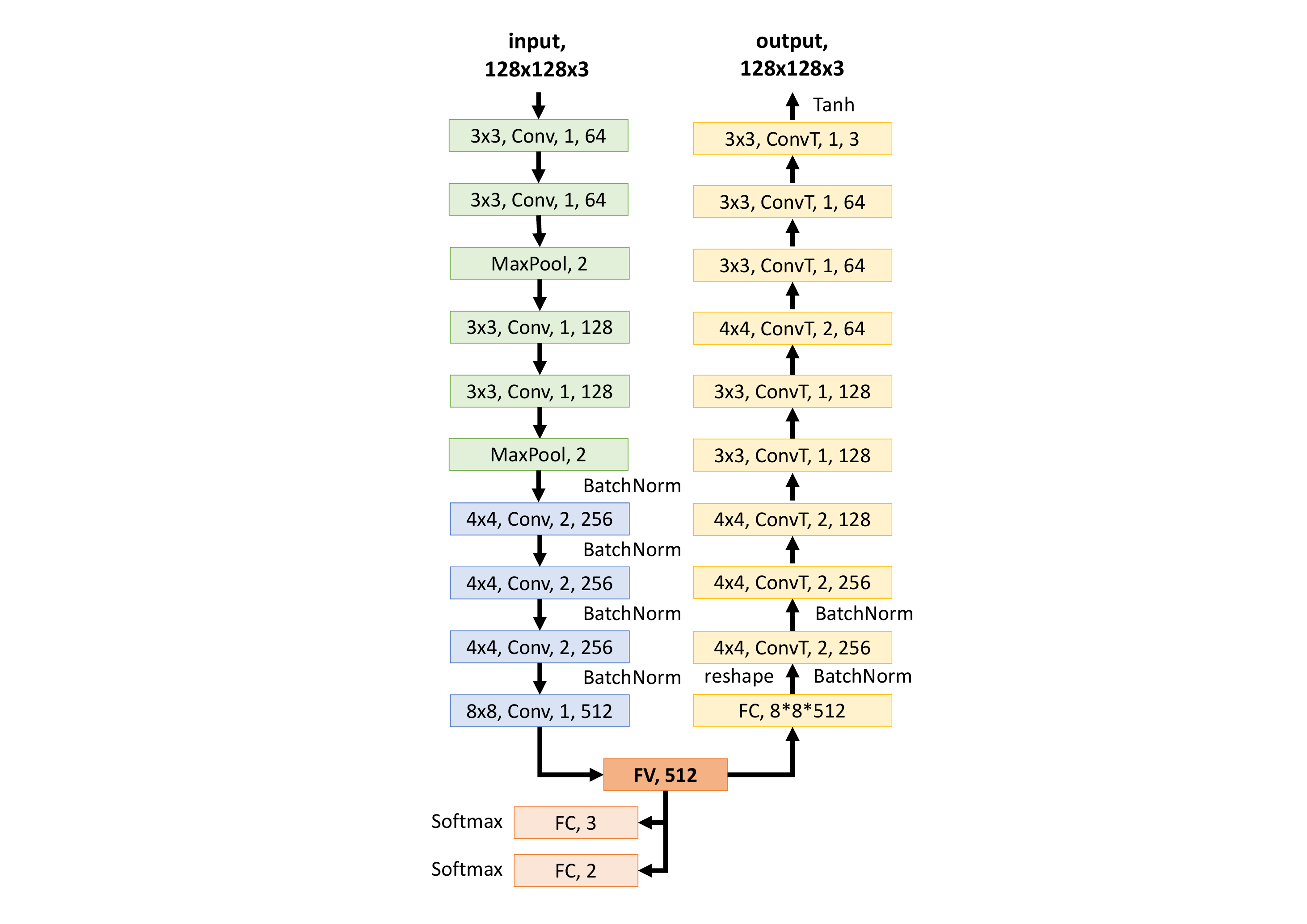}}
    \caption{The network structure of the Kolon Smart Net (KSN). The text in a box represents kernel size, type, stride, and channels. An encoded vector characterizing the product is called feature vector.}
    \label{fig:network-structure}
\end{figure}

We developed a deep convolutional autoencoder called the ``Kolon-Smart-Net'' (KSN) to effectively extract the features of fashion products from their visual images, as shown in Figure \ref{fig:network-structure}.
The KSN has the following characteristics:
\begin{itemize}
         \item Transfer learning
         \item Unsupervised learning
         \item Fashion feature layer
\end{itemize}
In the study, we used around 1300 original and clean images of fashion items (shirts and T-shirts), provided by KOLON F/C. To train the KSN with the small amount of images, we applied the \textit{transfer learning} approach, which is a machine learning method in which a pre-trained model is used as the starting point or a part of the learning structure for the target task. First four convolutional layers of a pre-trained VGGNet (by ImageNet dataset) is used for the first four convolutional layers of the encoding part of the KSN. Of course, we also performed image augmentation.

Because there are no specific and clear answers to the question of which set of attributes, such as color, pattern, collar, and line style, can represent the overall visual information of fashion items, it is difficult to apply a supervised learning strategy, which would require the class labels of the images to learn a network. Therefore, we developed the KSN based on a convolutional autoencoder (CAE), which is an unsupervised learning strategy \citep{masci2011stacked} for extracting the features of images without the use of classified and labeled data. In general, a CAE consists of two main parts: the encoder and the decoder. The encoder, which works on the convolutional layers, compresses an input image into a numerical feature, whereas the decoder, working on the transposed convolutional layers, reconstructs an image from the feature. The network is then trained to minimize the reconstruction loss, which is the error of the pixel values between the input image and the output image.

In addition to the CAE structure, we added classification layers that distinguish the major attributes of fashion products, to extract better feature vectors, as shown in the literature \citep{makhzani2015adversarial, jia2016learning}. Here, a classification layer called the fashion-feature layer is included to effectively extract the features of the images of the fashion products. As recommended by fashion designers and experts, fashion-feature layers extracts features that normally characterize fashion products. We chose the shape of collar and the length of sleeve as extracted features for shirts and T-shirts. With this information, we designed the KSN so that it could effectively identify these key features. The details of the structure of the KSN are provided in the Appendix \ref{app:KSN}.

We trained our network in Tensorflow, the famous DL library, and used a machine equipped with 32GB RAM, NVIDIA Titian X Pascal GPU, and a six core CPU \citep{abadi2016tensorflow}.

\begin{figure} \label{fig:KSN}
    \centerline{\includegraphics[width=11.5cm]{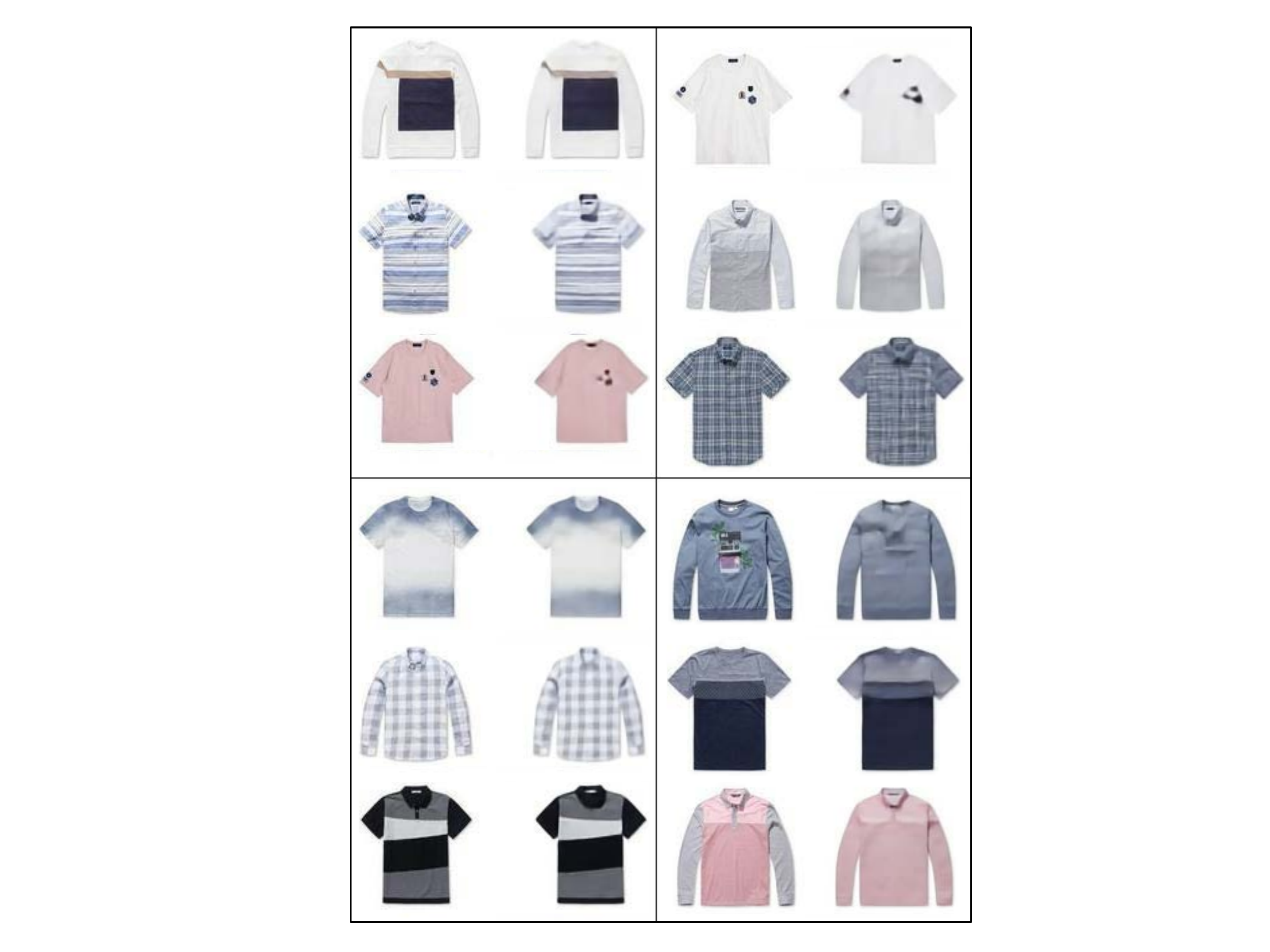}}
    \caption{Reconstructed image from the KSN}
    \label{fig:reconstructed-image}
\end{figure}

\subsection{Training Result}

\begin{figure}
\centering
\includegraphics[width=8.5cm]{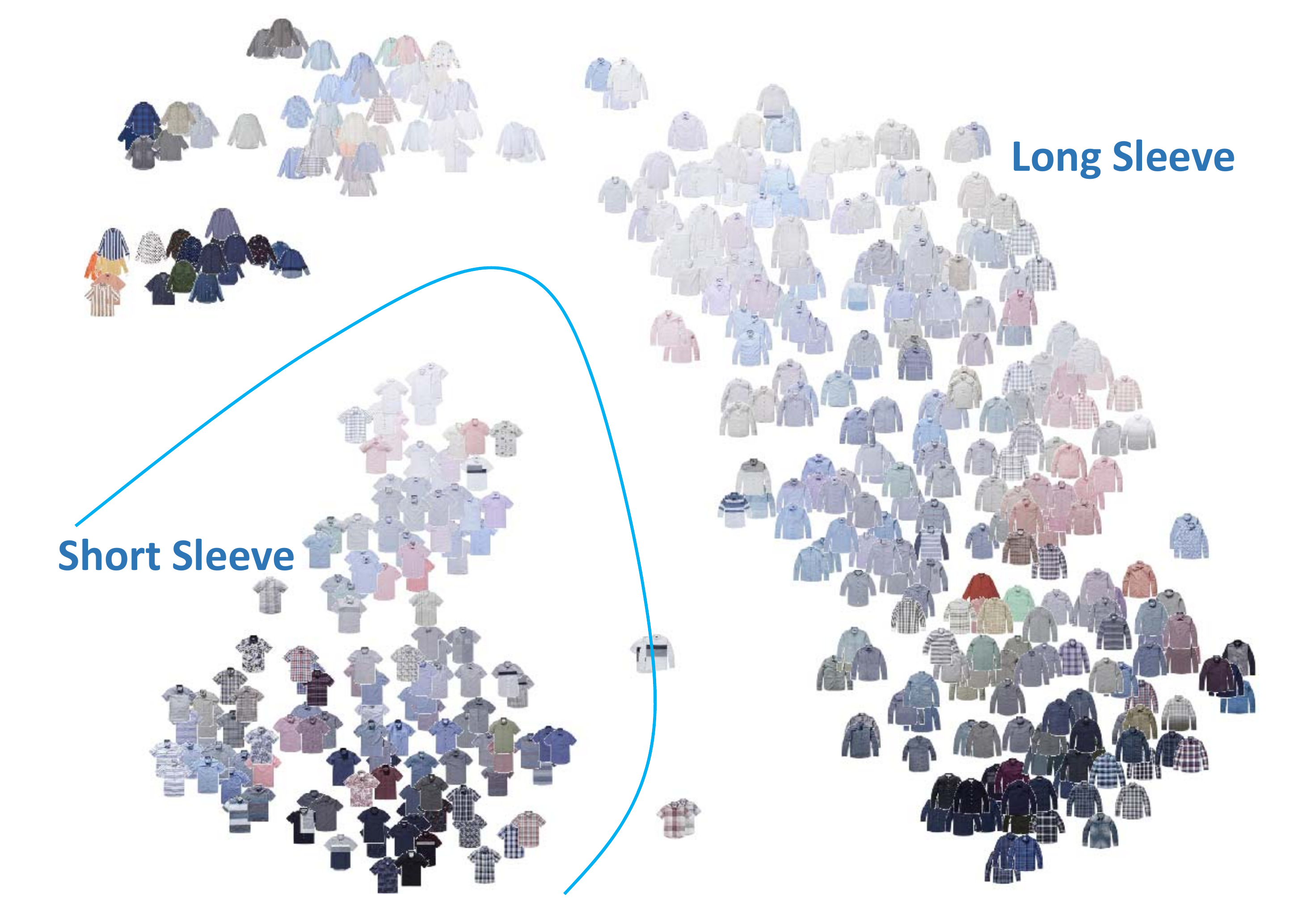}
\caption{The shirts are represented and laid out in a two-dimensional space}
\label{fig:visual-total}
\end{figure}

\begin{figure}
\centering
\includegraphics[width=8.5cm]{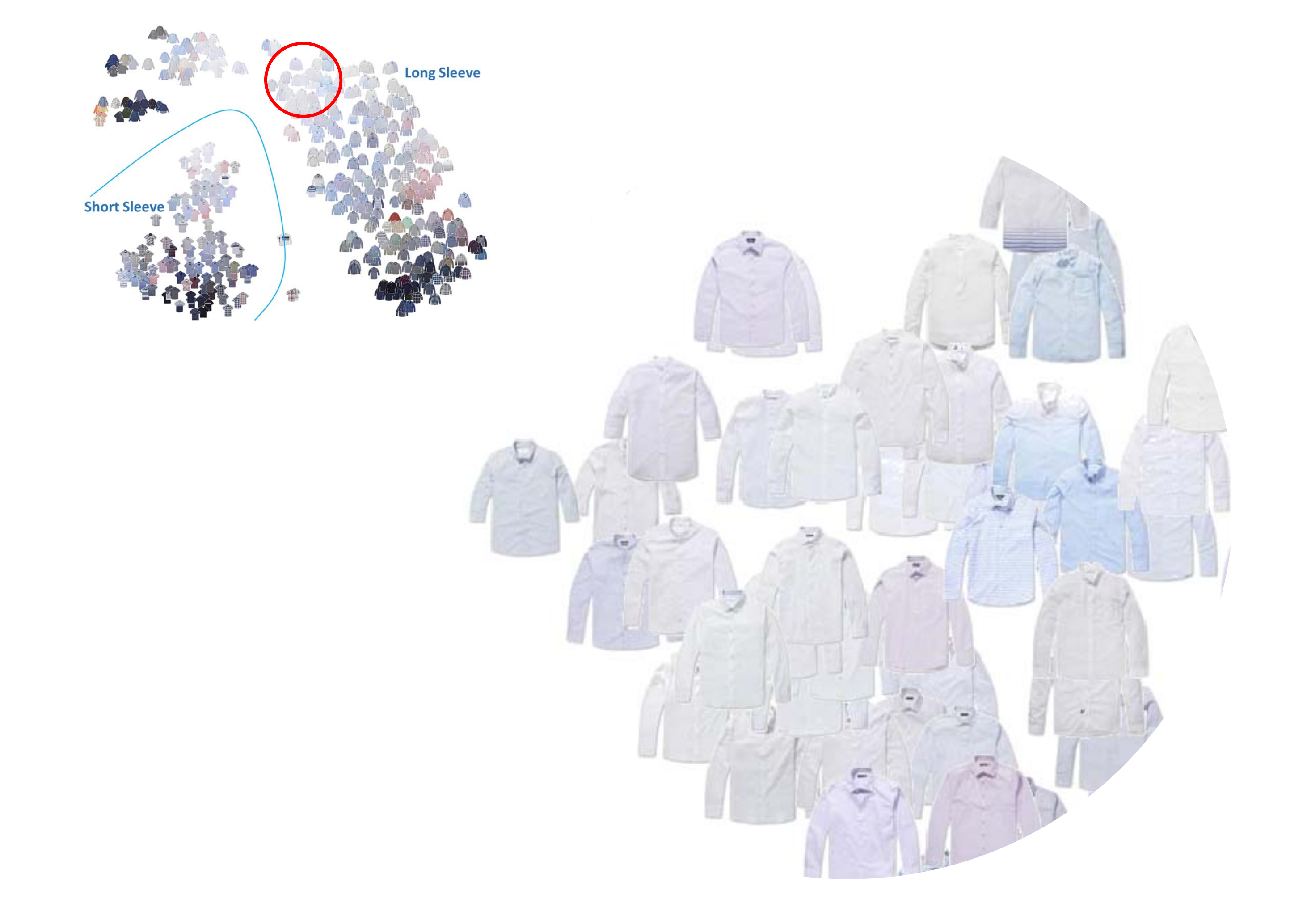}
\caption{An amplified view of Figure \ref{fig:visual-total}. The pastel colored shirts are located close to each other}
\label{fig:visual-pastel}
\end{figure}

\begin{figure}
\centering
\includegraphics[width=8.5cm]{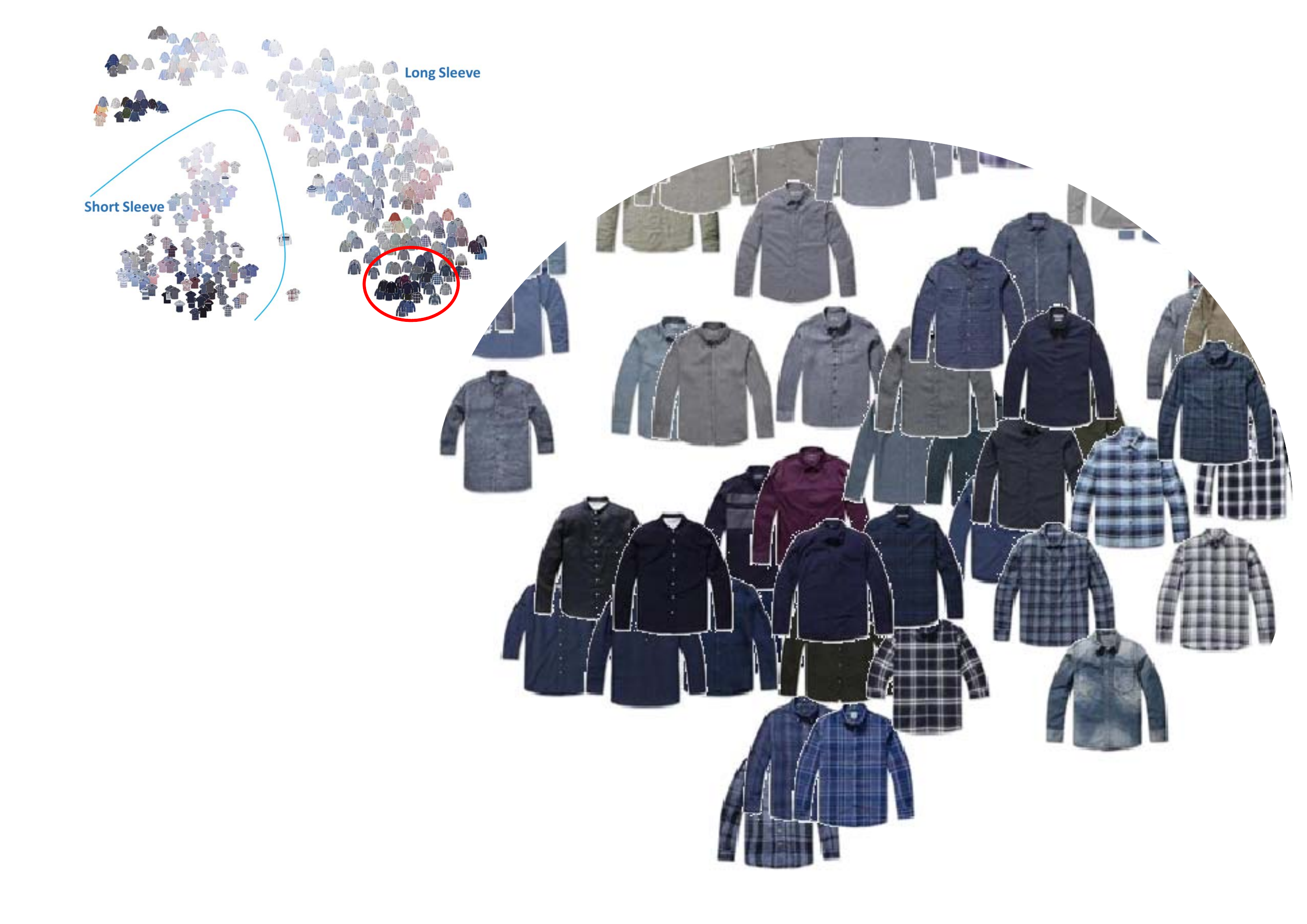}
\caption{An amplified view of Figure \ref{fig:visual-total}. The dark colored or dark patterned shirts are located close to each other}
\label{fig:visual-dark}
\end{figure}

The KSN takes a two-dimensional image of a product as input and generates a feature vector with 512 numerical elements as the output. Due to the nature of DL, it is impossible to interpret the meaning of the vector. However, if the network is well-trained for the extraction, the feature vectors will contain compact visual information for the reconstruction, and can be used to measure the similarity of the styles. So, firstly, we compared the reconstructed images to the original input images, as in Figure \ref{fig:reconstructed-image}. In Figure \ref{fig:reconstructed-image}, the first and third columns are the original images and the second and fourth columns are the reconstructed images. In detail, first column is for training images and third column is for validation images. Most of the images were reconstructed well except for some of the clothes with hard patterns. Secondly, for training and validation images of shirts, we plotted their feature vectors using a two-dimensional mapping method called t-SNE \citep{maaten2008visualizing}. As shown in Figure \ref{fig:visual-total}, the styles are mainly separated into two groups. Those on the left-bottom have short sleeves, whereas those on the other side have long sleeves. It can be seen that items with similar styles are located close to each other, whereas those with different styles are located at a distance (Figure \ref{fig:visual-pastel} and Figure \ref{fig:visual-dark}).

\section{Logical part: distribution optimization} \label{sec:optimization}

\subsection{Variety Measure}

\begin{figure}
\centering
\includegraphics[width=7.5cm]{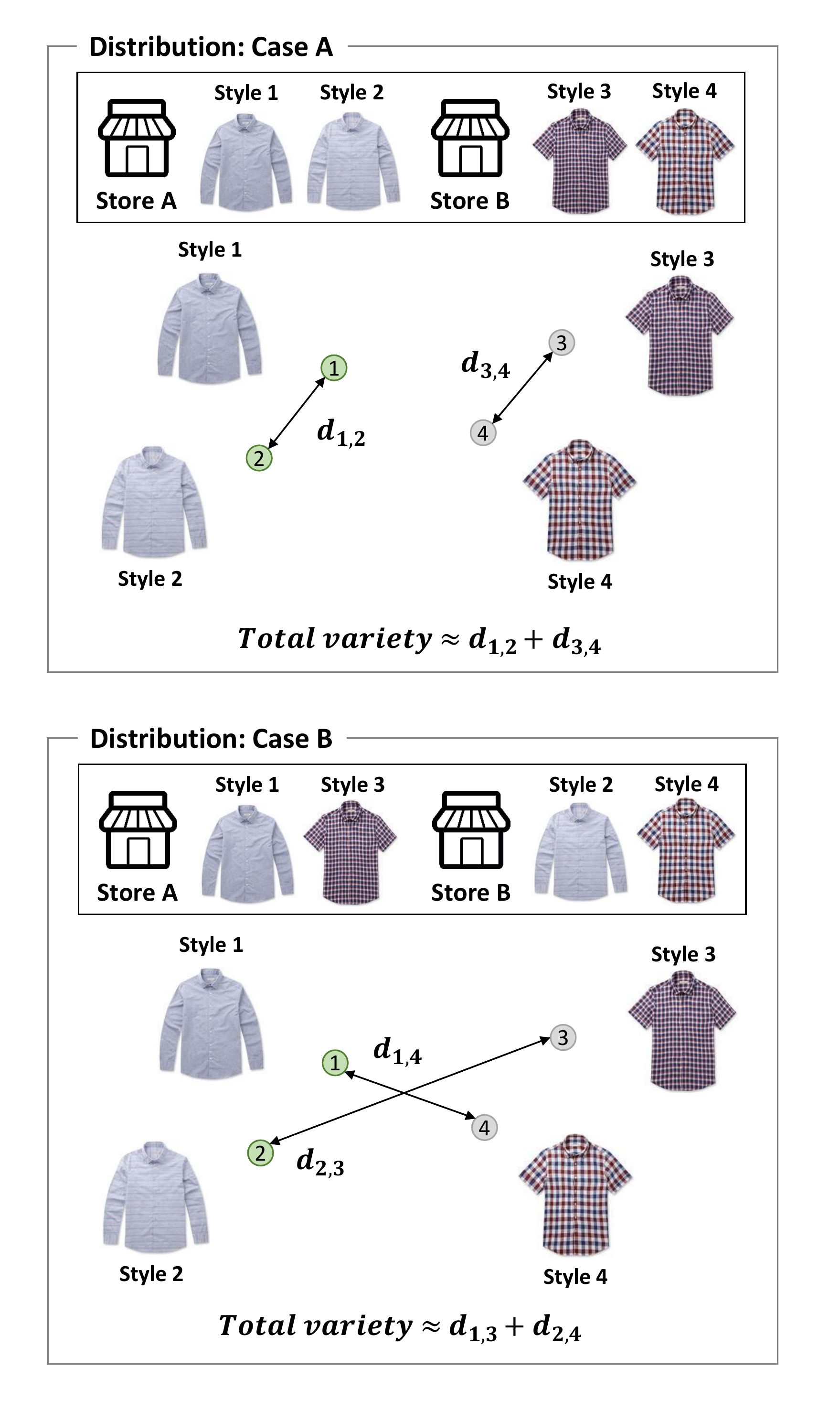}
\caption{Illustrative example of the variety measure: the variety measure for Case B is significantly greater than that for Case A}
\label{fig:distance}
\end{figure}

The feature vectors for each style, such as the shirts illustrated in Figure \ref{fig:visual-total}, are used to quantify the differences between the styles. Figure \ref{fig:distance} shows how the variety measures of the two different distribution cases in Figure \ref{fig:distribution-problem-1} are evaluated. The circles in each case illustrate the relative locations of the styles. For case A, style 1 and style 2 are distributed to store A, and styles 3 and 4 are distributed to store B. The visual variety of store A is then determined by the distance between styles 1 and 2 ($d_{1,2}$). Similarly, the visual variety of store B is the distance between styles 3 and 4 ($d_{3,4}$). The total visual variety of case A is then proportional to the sum of the two visual variety measures of the stores ($d_{1,2} + d_{3,4}$). For case B, the total visual variety is proportional to the distance between styles 1 and 4 and styles 2 and 3 ($d_{1,4} + d_{2,3}$). It can be clearly observed that the variety measure of case B is greater than that of case A.

As shown in Figure \ref{fig:distance}, the visual difference between two styles is evaluated from the distance between two vectors of the styles. Specifically, suppose $v(i)$ is the feature vector of style $i$. Then, the visual difference (distance) $d_{ij}$ between styles $i$ and $j$ is defined as $d_{ij} \overset{\Delta}{=}
||v(i)-v(j)||^{2}$.
The goal of the heterogeneous style mix is then translated to the process of selecting sets of styles that maximize the distances. This is known as the maximum dispersion problem, in which a given set of objects has to be partitioned into a number of groups such that all of the objects assigned to the same group are dispersed as much as possible with respect to some distance measure between the pairs of objects \citep{fernandez2013maximum} \citep{hassin1997approximation}.
In our case, the given set is the styles of the products, and the groups are the stores. We need to select specific styles for each store to maximize the dispersal of the styles in each store.
We now introduce the store variety measure ($v_s$) for quantifying the variety (the degree of heterogeneous mix of styles) for each store.
The five commonly used distance-based measures are \emph{MaxSumSum}, \emph{MaxMin}, \emph{MaxMinSum}, \emph{MaxSumMin}, and \emph{MaxMean}. (The prefix \emph{Max-} indicates that the measures should
be maximized to obtain the greatest variety of styles.) With these measures, the variety $v_s$ of a given product set $I_s$ for store $s$ is formulated as follows.

\begin{itemize}
\item \emph{MaxSumSum}: $v_{s} = \sum_{i,j \in I_{s}, i<j} d_{ij}$
\item \emph{MaxMin}:    $v_{s} = \min_{i,j \in I_{s}, i<j} d_{ij}$
\item \emph{MaxMinSum}: $v_{s} = \min_{i \in I_{s}} \sum_{j \in I_{s}\setminus{i}} d_{ij}$
\item \emph{MaxSumMin}: $v_{s} = \sum_{i \in I_{s}} \min_{j \in I_{s}\setminus{i}} d_{ij}$
\item \emph{MaxMean}:   $v_{s} = \sum_{i,j \in I_{s}, i<j} d_{ij}/|I_{s}|$
\end{itemize}

\emph{MaxSumSum} measures the variety of a given style set by summing the distances between
the possible pairs of styles in the store. \emph{MaxMin} measures the variety based on the minimum
distances between all of the possible pair of styles. \emph{MaxMinSum} measures the variety based on the minimum of the
sum of the distances between each individual style and all of the others. \emph{MaxSumMin}
measures the variety by summing the minimum distances between each individual style and all of the
others. \emph{MaxMean} measures the variety by dividing the sum of all of the distances
between the styles by the size of the style set.

We identified the following conditions for selecting an appropriate dispersion measure from these measures.

\begin{itemize}
\item Monotonicity: Suppose that a specific set of styles is selected for a store. Then the variety measure needs to be non-decreasing if an additional style is to be added to the existing set.
\item Linearity: When a specific number of styles are randomly selected, the variety measure should be linearly proportional to the number of selected styles.
\end{itemize}

The first condition is logically rigorous, while the second condition is somewhat qualitative. The first condition indicates that if any additional style is added to the existing set of styles in the store, the variety measure needs to be increased. This condition is illustrated by the example in Figure \ref{fig:variety-condition2}.

\begin{figure}
    \centerline{\includegraphics[width=6cm]{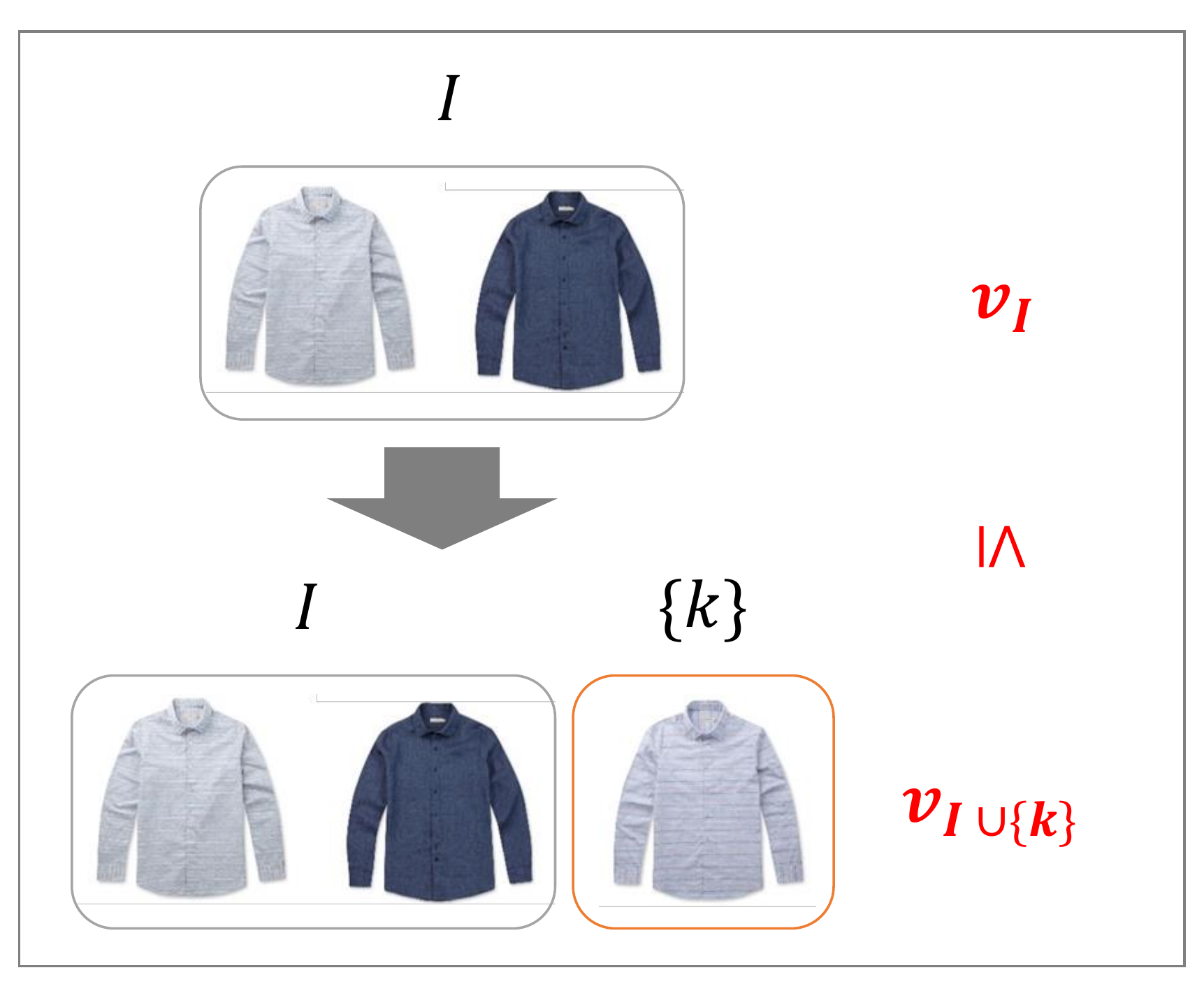}}
    \caption{Illustrative example of the first condition for the fashion variety measure.}
    \label{fig:variety-condition2}
\end{figure}

Appendix \ref{app:variety-measure} shows the observations and mathematical proofs indicating that \emph{MaxMean} is the variety measures satisfying the monotonicity condition, while \emph{MaxMinSum} and \emph{MaxSumMin} do not satisfy it. The numerical experiments in Appendix \ref{app:linearity} show that \emph{MaxMean} satisfies the linearity condition, while \emph{MaxMin} \emph{MaxSumSum} do not satisfy it. As a result, we use \emph{MaxMean} as the variety measure in our fashion variety model.

\subsection{Optimization Modeling}

We created an optimization model to optimally allocate the different styles of fashion products to the stores using the vector values of the product images generated from the KSN. Specifically, the model seeks to maximize the variety measure while satisfying the distribution requirements, such as the minimum distribution of styles for each store.

The distribution decision is subject to three types of managerial and operational constraints on which product and how much of it should be distributed to each store. First, the \emph{resource constraint} states that the total distribution quantity for product $i$ is restricted to its planned total quantity $p_i$. Second, the \emph{store quantity constraint} states that the deviation between the store desired quantity $q_s$ and the total distribution quantity for store $s$ should be less than or equal to $\alpha$\% of $q_s$. Finally, the \emph{minimum distribution quantity constraint} states that if a product is to be distributed to a store, then the quantity distributed to the store needs to be greater than $m_i$.
The optimization model is presented in Appendix \ref{app:optimization}.

\section{Experiment result} \label{sec:result}

\begin{figure}
    \centerline{\includegraphics[width=8.5cm]{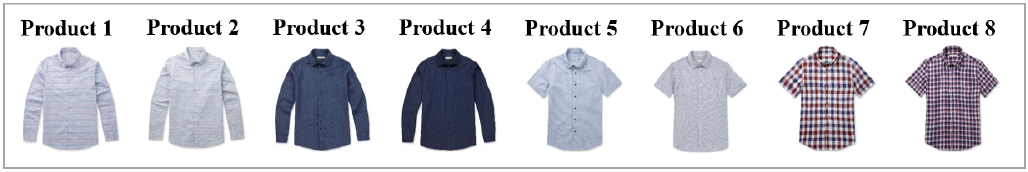}}
    \caption[Images of the test products]{Images of the test products.}
    \label{fig:example-products}
\end{figure}

The feature vectors are first extracted and the distance between each pair of styles is evaluated. The results are shown in Figure \ref{fig:example-distances} in the form of an intensity map table.

\begin{figure}
    \centerline{\includegraphics[width=8.5cm]{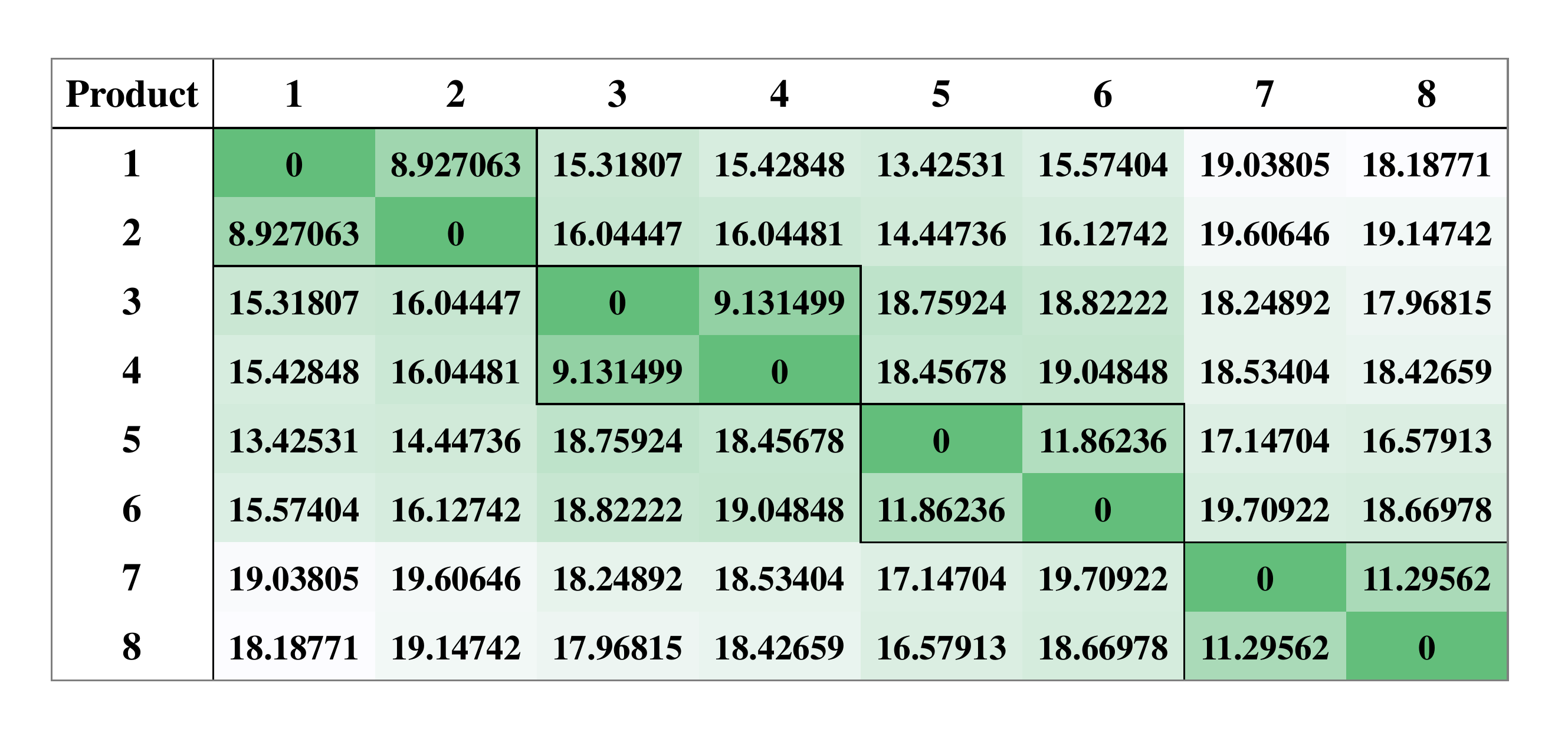}}
    \caption{Distance matrix of the test products.}
    \label{fig:example-distances}
\end{figure}

\begin{figure}
    \centerline{\includegraphics[width=8.5cm]{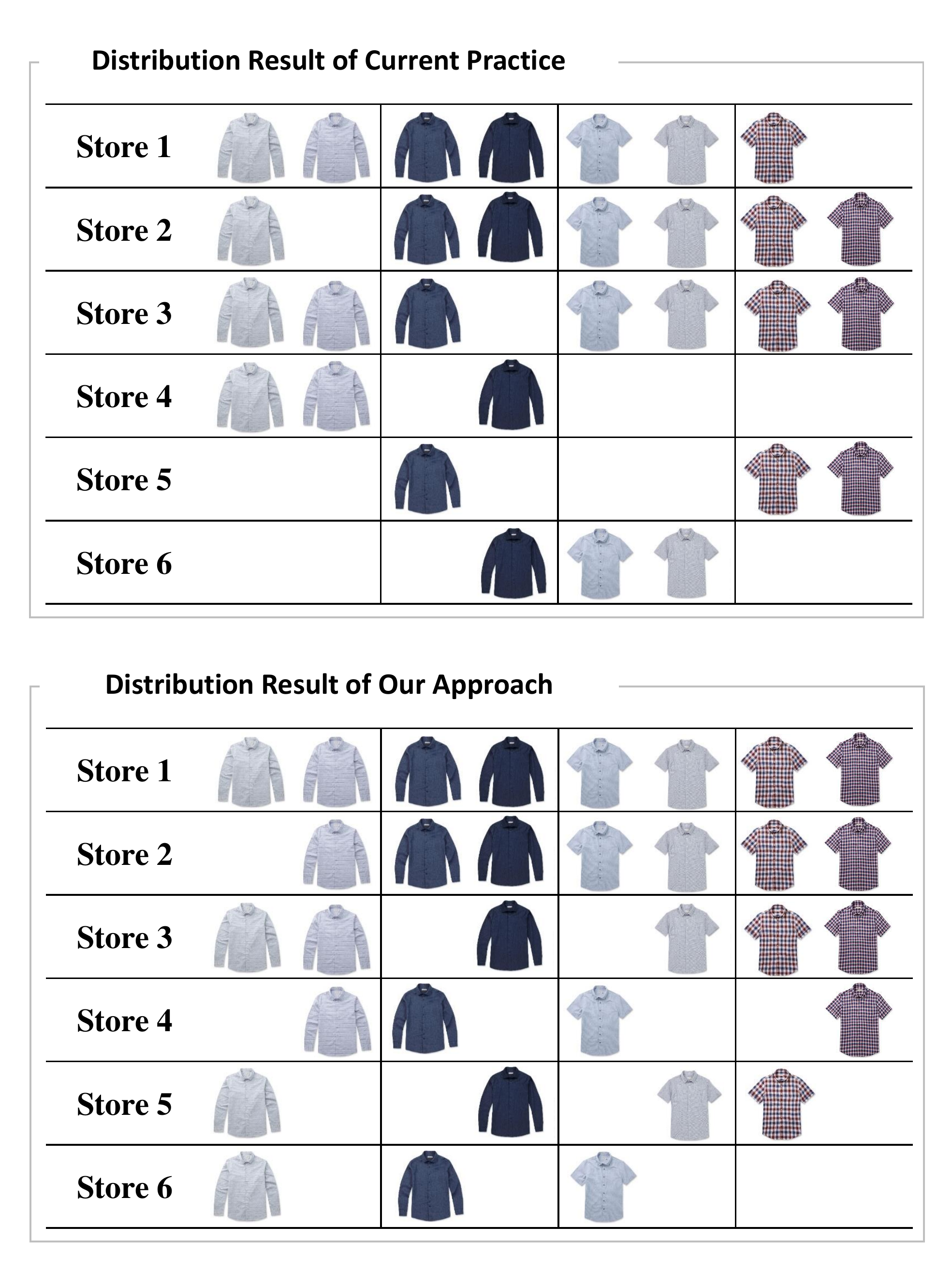}}
    \caption{Distribution results}
    \label{fig:example-result}
\end{figure}

This section presents an example case for illustrative purposes.
The target product type is men's shirts. As shown in Figure \ref{fig:example-products}, products 1-2, 3-4, 5-6, and 7-8 are visually similar.

The other parameters are setup as follows. The store desired quantity ranges between 10 and 30. For each product, we respectively set the planned available distribution quantity and the minimum distribution quantity to be 16 and 4, where the sum of all available distribution quantities of products is close to the sum of all desired quantities of stores. The percentage of maximum allowable deviation $\alpha$ is set to be 20\%.

Figure \ref{fig:example-result} shows the results of KOLON F/C's current practice and our approach, respectively. The product assignment for each store is listed by descending order of the store desired quantity. The stores with relatively higher expected sales (Stores 1 to 3) receive more products than those with relatively lower expected sales (Stores 4 to 6). The higher stores receive products from each distinct group in both results. It is particularly important to ensure the visual variety of products at low-sales store because they receive only a few different products. Note that each store except Store 6 receives at least one product in each group. Moreover, each product that Store 6 receives is from a distinct group. In terms of the total variety, our approach (218.77) is about 13\% better than the current practice (193.51).

\section{Conclusion} \label{sec:conclusion}
The high-variety and low-volume strategy of production in fashion industry ensues the dispersion problem of the styles of fashion products. To solve this practical supply chain problem, we proposed the approach that it uses both AI based DL technology for image representation and mathematical optimization with the variety measure. The limitation of this study is showing only a small case in the experiment. So, by on-going research, we have tried to develop a heuristic optimization algorithm and test it for a real industrial case with more than 30 styles and 100 stores. 

The academic contribution of our work is clear. The KSN is one of the first cases in which cutting-edge AI based DL technology has been combined with the conventional optimization modeling approach. Although the DL approach is rapidly improving and it has been widely applied in the engineering field, there have been few applications of the technology in the business arena. This paper shows that the current AI technology can be effectively combined with the conventional optimization modeling approach to create new horizons in business analytics.

\begin{acks}
Thank KOLON F/C for supporting the research.
\end{acks}

\bibliographystyle{ACM-Reference-Format}
\bibliography{sample-base}

\appendix

\section{KSN structure} \label{app:KSN}

KSN is a deep convolutional autoencoder. The encoder part of the KSN is composed of 8 convolutional layers. To achieve stable training of the network, we made first four convolutional layers of the KSN the same as the first four convolutional layers of the VGG-16 network, and sat the same weights as the pre-trained VGG-16 at the initialization, as a form of transfer learning \citep{oquab2014learning, simonyan2014very}. During the initial epochs of training, we did not update the weights of the four layers with gradient, 
but after that, we updated them. We also used batch normalization and LeakyReLU activation in the later layers of the encoder. The encoder of the KSN then compresses a 128 $\times$ 128 $\times$ 3 image into a 512-dimension feature vector.

Next, the fashion-feature layers (classification output layers) are processed with softmax in the middle of the KSN. For the target styles (shirts and T-shirts), there are two fashion-feature layers for the length of the sleeve and presence of collar, respectively, which we selected as the main attributes of these garments. Finally, the decoder part of the KSN comprises one fully connected layer and nine transposed convolutional layers for upsampling the 512-dimension feature vector to the original 128 $\times$ 128 $\times$ 3 image. We also used batch normalization, LeakyReLU activation, and scheduled learning rates during the training. Additionally, we used 4 $\times$ 4 kernel to avoid the checkerboard artifacts when using stride 2 in a deconvolutional layer \citep{odena2016deconvolution}. We did the same behavior for a convolutional layer.

\section{Variety measure conditions}
\subsection{Monotonicity condition}\label{app:variety-measure}
We provide the following observations and theorems for investigating the
non-decreasing patterns of the variety measures (\emph{MaxMinSum}, \emph{MaxSumMin}, and \emph{MaxMean}), when any one product is added to the existing
product set.

\begin{obs}
\label{obs-MaxMinSum}
The variety of a product set by \emph{MaxMinSum} is not necessarily non-decreasing when any one
product is added to the existing product set.
\end{obs}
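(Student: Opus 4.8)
Since the statement asserts a \emph{failure} of monotonicity, the plan is to exhibit an explicit counterexample rather than to argue in general. The guiding observation is structural: when a new style $k$ is appended to an existing set $I_s$, the inner sum $\sum_{j \in I_s \setminus i} d_{ij}$ attached to each incumbent style $i$ can only grow, since it gains the nonnegative term $d_{ik}$. However, \emph{MaxMinSum} takes a minimum over styles, and the newcomer contributes a brand-new candidate value $\sum_{j \in I_s} d_{kj}$ to that minimum. If $k$ is placed close to every style already present, this new distance-sum can be strictly smaller than the previous minimizer, dragging the overall measure down. So the engine of the counterexample is to insert a ``central'' style that is close to all the others.

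Concretely, I would work in one dimension (which suffices, since the feature vectors may be taken collinear) and choose three styles with feature values $v(1) = 0$, $v(2) = 10$, and a later-added $v(3) = 5$. Recall $d_{ij} = \|v(i)-v(j)\|^2$. For the initial set $I_s = \{1,2\}$ the only pairwise distance is $d_{12} = 100$, so both incumbent distance-sums equal $100$ and $v_s = 100$. After adding style $3$ we have $d_{13} = d_{23} = 25$, giving distance-sums $125$, $125$, and $50$ for styles $1$, $2$, and $3$ respectively; hence $v_s = \min\{125, 125, 50\} = 50 < 100$. The measure has strictly decreased upon the addition of a style, which is exactly the violation of the monotonicity condition that the observation claims.

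The verification above involves only elementary arithmetic, so I do not anticipate a genuine technical obstacle. The one point requiring care is the \emph{design} rather than the computation: the newcomer must be positioned so that its distance-sum, and not that of an incumbent, realizes the new minimum, which is why a point lying between two widely separated styles is the natural choice. I would then remark that the same phenomenon persists for sets of arbitrary size---any time a style is inserted near the ``center'' of a well-spread set---so the failure is robust and not an artifact of the particular three-style instance.
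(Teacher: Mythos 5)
Your proof is correct and follows essentially the same route as the paper: exhibit an explicit counterexample in which the newly added product sits close to all existing ones, so that its own distance-sum becomes the new, strictly smaller minimum. The paper's configuration is an equilateral triangle with the new point at the incenter, while yours is two collinear points with the new point at the midpoint; the mechanism is identical (and your configuration is, incidentally, the same one the paper uses for its \emph{MaxSumMin} counterexample in Observation 2).
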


\begin{proof}
\label{pf_obs-MaxMinSum}
To prove the theorem, we give a counterexample on $v_{I} \leq v_{I\cup\{k\}}$ for \emph{MaxMinSum},
which is $\min_{i \in I} \sum_{j \in I\setminus{i}} d_{ij}$ $\leq \min_{i \in I\cup\{k\}}$ $\sum_{j \in I\cup\{k\}\setminus{i}} d_{ij}$. We consider a product set $I$ with three products, namely P1, P2, and P3, whose feature vectors form a regular triangle with a length of 1, as shown in Figure
\ref{fig:thm-MaxMinSum}. The variety of the product set $v_{I}$ by \emph{MaxMinSum} is $\min (2, 2,
2) = 2$. Suppose another product, P4, is added to the product set $I$, where P4 is located in the
incenter of the triangle. In this case, the variety of the product set $v_{I\cup\{P4\}}$ by
\emph{MaxMinSum} is $\min (2+1/\sqrt{3},2+1/\sqrt{3},2+1/\sqrt{3},3/\sqrt{3}) \approx 1.7321$. In
this example, \emph{MaxMinSum} decreases when a product P4 is added to the existing product set
$I$, which contradicts the third condition $v_{I} \leq v_{I\cup\{k\}}$.
\end{proof}

\begin{figure}
    \centerline{\includegraphics[width=6cm]{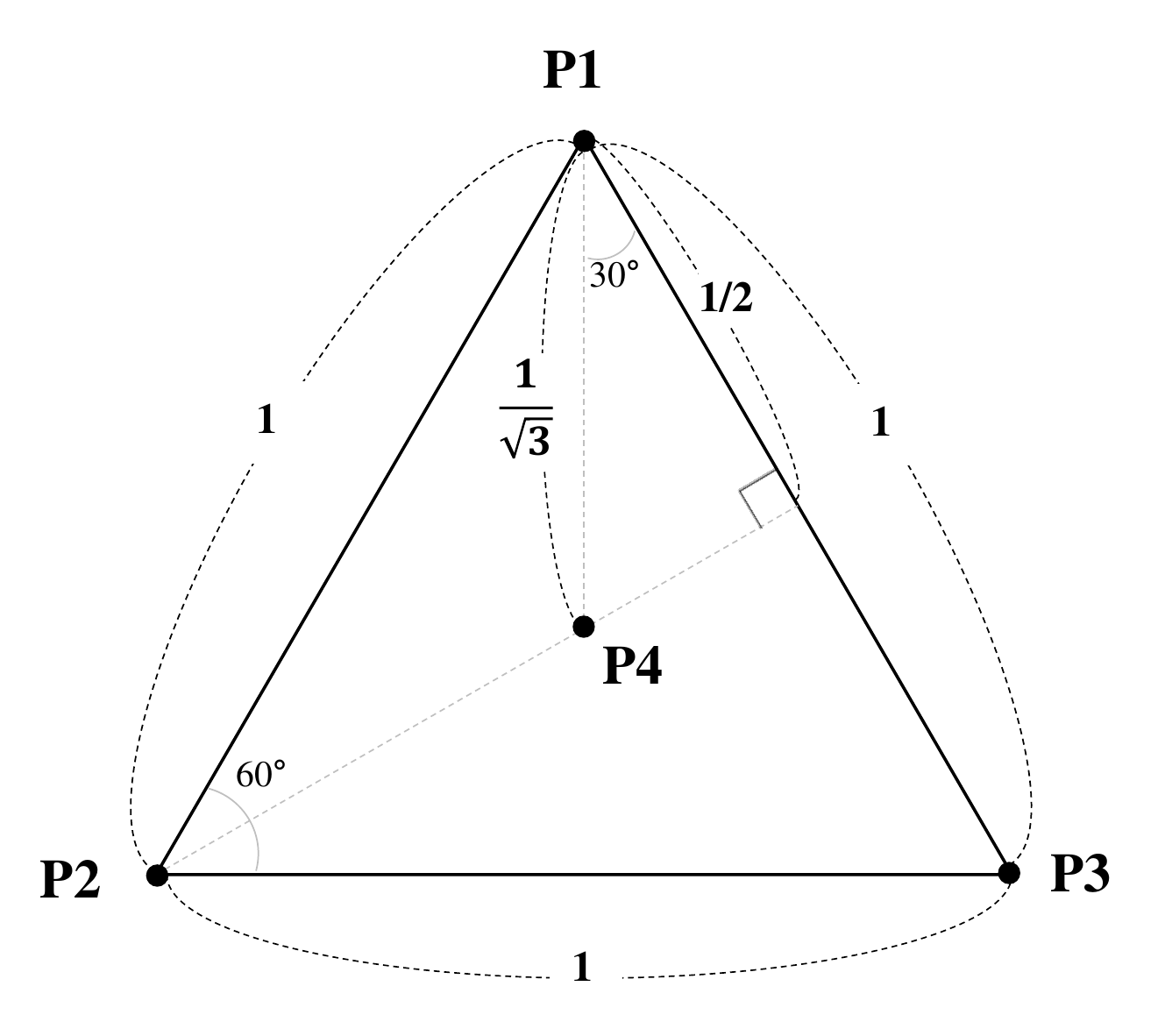}}
    \caption[Counterexample for MaxMinSum]{Counterexample for MaxMinSum.}
    \label{fig:thm-MaxMinSum}
\end{figure}

\begin{obs}
\label{obs-MaxSumMin}
The variety of a product set by \emph{MaxSumMin} is not necessarily non-decreasing when any one
product is added to the existing product set.
\end{obs}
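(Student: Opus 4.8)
The plan is to refute monotonicity by producing a single counterexample, exactly as in the proof of Observation \ref{obs-MaxMinSum}. Concretely, I would exhibit a product set $I$ and an added product $k$ for which $v_I > v_{I \cup \{k\}}$ under \emph{MaxSumMin}; that is, $\sum_{i \in I} \min_{j \in I \setminus \{i\}} d_{ij} > \sum_{i \in I \cup \{k\}} \min_{j \in (I \cup \{k\}) \setminus \{i\}} d_{ij}$, which directly contradicts the required inequality $v_I \le v_{I \cup \{k\}}$.

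For the construction I would reuse the geometry of Figure \ref{fig:thm-MaxMinSum}: let $I = \{P_1, P_2, P_3\}$ have feature vectors forming an equilateral triangle of side $1$, so that $d_{ij} = 1$ for every pair. Each point's nearest-neighbor distance is then $\min_{j \neq i} d_{ij} = 1$, and summing over the three points gives $v_I = 1 + 1 + 1 = 3$. Next I would add $P_4$ at the incenter (equivalently the centroid) of the triangle, whose distance to each vertex is $1/\sqrt{3} \approx 0.577$.

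I would then recompute each point's contribution in $I \cup \{P_4\}$. For every original vertex the newly inserted $P_4$ is strictly closer than either of the other two vertices, so $\min_{j} d_{ij} = 1/\sqrt{3}$ for $i \in \{P_1, P_2, P_3\}$; likewise $P_4$'s own nearest-neighbor distance is $1/\sqrt{3}$. Summing the four equal terms yields $v_{I \cup \{P_4\}} = 4/\sqrt{3} \approx 2.309 < 3 = v_I$, which completes the counterexample and establishes the claim.

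The step I expect to carry the real weight is not the arithmetic but the choice of configuration, together with the insight it encodes: under \emph{MaxSumMin} each summand is a point's distance to its \emph{nearest} neighbor, so a single new point placed centrally becomes the nearest neighbor of every existing point at once, collapsing all of their min-terms simultaneously. The delicate part of any such construction is guaranteeing that this aggregate collapse of the existing terms exceeds the positive contribution the new point adds; the symmetric incenter placement makes both effects transparent and the resulting inequality exact, which is why I would prefer it to an asymmetric alternative (such as adding a point arbitrarily close to one existing vertex, which also works but obscures the mechanism).
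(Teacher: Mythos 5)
Your proposal is correct and takes essentially the same approach as the paper: a counterexample in which a centrally placed new point becomes the nearest neighbor of every existing point simultaneously, collapsing all of their min-terms. The paper's own example is even simpler --- two points at distance $2$ with the midpoint inserted, dropping the variety from $2+2=4$ to $1+1+1=3$ --- but your triangle-plus-incenter configuration ($v_I = 3$ versus $v_{I\cup\{P_4\}} = 4/\sqrt{3} \approx 2.309$) is equally valid and exploits the identical mechanism.
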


\begin{proof}
\label{pf_obs-MaxSumMin}
To prove the theorem, we give a counterexample on $v_{I} \leq v_{I\cup\{k\}}$ for \emph{MaxSumMin},
which is $\sum_{i \in I} \min_{j \in I\setminus{i}} d_{ij}$ $\leq \sum_{i \in I\cup\{k\}}$  $\min_{j \in I\cup\{k\}\setminus{i}} d_{ij}$. We consider a product set $I$ with two products, namely P1 and P2, whose feature vectors form a line with a length of 2, as shown in Figure \ref{fig:thm-MaxSumMin}. The variety of the product set $v_{I}$ by \emph{MaxSumMin} is $2 + 2 = 5$. Suppose another product, P3, is added to the product set $I$, where P3 is located in the middle of P1 and P2 on the line. In this case, the variety of the product set $v_{I\cup\{P3\}}$ by \emph{MaxSumMin} is $1 + 1 + 1 = 3$. In this example, \emph{MaxSumMin} decreases when a product P3 is added to the existing product set $I$, which contradicts the third condition $v_{I} \leq v_{I\cup\{k\}}$.
\end{proof}

\begin{figure}
    \centerline{\includegraphics[width=6cm]{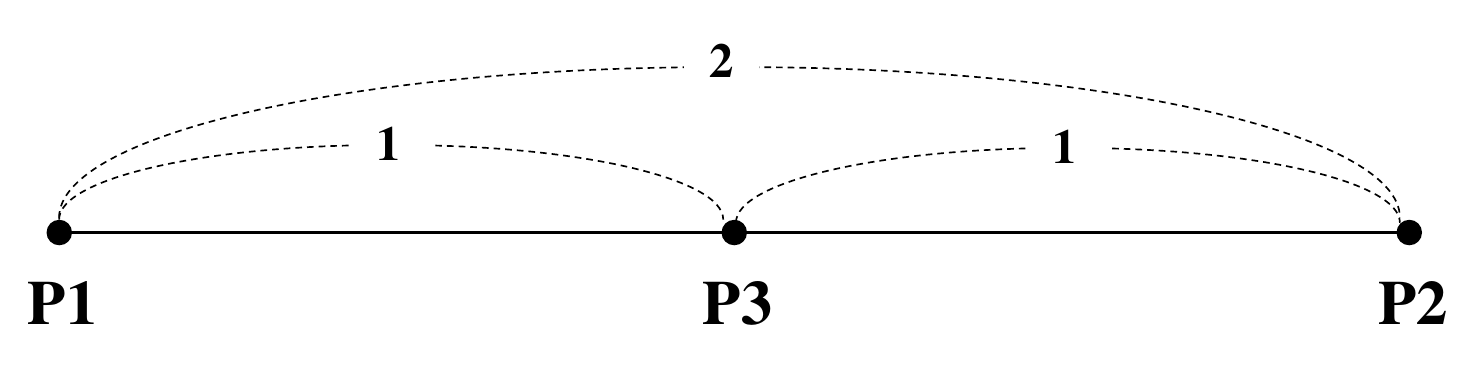}}
    \caption[Counterexample for MaxSumMin]{Counterexample for MaxSumMin.}
    \label{fig:thm-MaxSumMin}
\end{figure}

\begin{thm}
\label{thm-MaxMean}
The variety of a product set determined by \emph{MaxMean} is non-decreasing when any one product is added to
the existing product set.
\end{thm}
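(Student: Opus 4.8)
The plan is to first rewrite the \emph{MaxMean} measure in a form that makes the effect of adding a point transparent. Writing $n = |I|$ and $S = \sum_{i,j\in I,\, i<j} d_{ij}$, so that $v_I = S/n$, adding a new product $k$ with $k \notin I$ raises the pairwise sum by exactly $T := \sum_{i\in I} d_{ik}$ and the cardinality by one. Hence $v_{I\cup\{k\}} = (S+T)/(n+1)$, and since both denominators are positive, a single cross-multiplication shows that the desired inequality $v_I \le v_{I\cup\{k\}}$ is equivalent to the clean claim $S \le nT$.

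The key step is to interpret $S$ and $T$ geometrically through the centroid of $I$. Let $c = \frac{1}{n}\sum_{i\in I} v(i)$. A standard identity for squared Euclidean distances gives $S = \sum_{i<j}\|v(i)-v(j)\|^2 = n\sum_{i\in I}\|v(i)-c\|^2$, which I would verify by expanding $\sum_{i,j}\|v(i)-v(j)\|^2$ into inner products, collecting terms, and using $\sum_{i\in I} v(i) = nc$. Writing $Q := \sum_{i\in I}\|v(i)-c\|^2$ for this sum of squared deviations, the identity reads $S = nQ$, so in fact $v_I = Q$.

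Next I would bound $T$ below by $Q$. Since $d_{ik} = \|v(i)-v(k)\|^2$, we have $T = \sum_{i\in I}\|v(i)-v(k)\|^2$, and the parallel-axis identity $\sum_{i\in I}\|v(i)-p\|^2 = Q + n\|c-p\|^2$ (equivalently, the fact that the centroid minimizes the sum of squared distances) with $p = v(k)$ gives $T = Q + n\|c-v(k)\|^2 \ge Q$. Combining, $nT \ge nQ = S$, which is precisely the reduced claim; reversing the first reduction yields $v_I \le v_{I\cup\{k\}}$, with equality exactly when $v(k)=c$.

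The main obstacle I anticipate is purely in finding the right reformulation rather than in any hard estimate. A naive attempt to bound each term by the squared-distance triangle inequality $d_{ij}\le 2d_{ik}+2d_{jk}$ only delivers $S \le 2(n-1)T$, which is too weak to reach $S \le nT$ once $n > 2$. The centroid identity $S = nQ$ is exactly what removes this factor-of-two slack, so the crux is recognizing that \emph{MaxMean} coincides with the sum of squared deviations from the centroid; once that observation is in hand, the centroid-minimization property supplies the required monotonicity immediately.
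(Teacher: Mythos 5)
Your proof is correct, but it takes a genuinely different---and, given the paper's own definitions, more solid---route than the paper's. Both arguments begin with the same reduction: clearing denominators in $v_I \le v_{I\cup\{k\}}$ to get the claim $D \le N\,T$, where $D$ is the pairwise sum over $I$, $N = |I|$, and $T = \sum_{j\in I} d_{kj}$. From there the paper invokes the triangle inequality $d_{ij} \le d_{ki} + d_{kj}$, sums over pairs to obtain $D \le (N-1)T$, and concludes $D/N \le D/(N-1) \le T$. That step is valid only when $d$ is a true metric, but the paper defines $d_{ij} = \|v(i)-v(j)\|^2$, a \emph{squared} Euclidean distance, for which the triangle inequality fails: collinear feature vectors at $0,1,2$ give $d_{13} = 4 > d_{12} + d_{23} = 2$, and in that same configuration (with $k$ the midpoint of two points at distance $2$) the paper's intermediate bound fails outright, since $D = 4 > 2 = (N-1)T$, even though the final conclusion $D \le NT$ still holds with equality. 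Your centroid argument---the identity $D = N\sum_{i\in I}\|v(i)-c\|^2$ plus the parallel-axis identity $T = \sum_{i\in I}\|v(i)-c\|^2 + N\|c-v(k)\|^2$---exploits exactly the squared-Euclidean structure that the paper's definition supplies, so it proves the theorem as actually stated; it also yields the sharper quantitative form $v_{I\cup\{k\}} - v_I = N\|c - v(k)\|^2/(N+1) \ge 0$, identifying equality precisely when the new product sits at the centroid. What the paper's route would buy, had $d$ been a genuine metric, is generality beyond Euclidean embeddings; what yours buys is a proof that is correct for the paper's $d_{ij}$, and your remark that the relaxed inequality $d_{ij} \le 2(d_{ki}+d_{kj})$---the best available for squared distances---only gives $D \le 2(N-1)T$ pinpoints exactly why no patched-up version of the paper's triangle-inequality argument can reach $D \le NT$ for $N > 2$.
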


\begin{proof}
\label{pf_thm-MaxMean}
To prove the theorem, we need to show $v_{I} \leq v_{I\cup\{k\}}$ for \emph{MaxMean}, which is
$\sum_{i,j \in I, i<j} d_{ij} / |I| \leq \sum_{i,j \in I\cup\{k\}, i<j} d_{ij} / |I\cup\{k\}|$.
The right hand side can be rewritten as $(\sum_{i,j \in I, i<j} d_{ij}/|I| + \sum_{j \in I} d_{kj})
/ |I\cup\{k\}|$.
Let $D = \sum_{i,j \in I, i<j} d_{ij}$ and $N = |I|$.
Then, we want to show $D / N \leq (D + \sum_{j \in I} d_{kj}) / (N + 1)$.
This can be rewritten as $D / N \leq \sum_{j \in I} d_{kj}$.
By the triangle inequality, $d_{12} \leq d_{k1} + d_{k2}$, \ldots, $d_{ij} \leq d_{ki} + d_{kj}$.
By summing up these inequalities, we get $D \leq (N - 1) \sum_{j \in I} d_{kj}$.
Therefore, $D / N \leq D / (N - 1) \leq \sum_{j \in I} d_{kj}$, which shows that $v_{I} \leq
v_{I\cup\{k\}}$ for \emph{MaxMean}.
\end{proof}

Theorem \ref{thm-MaxMean} shows that \emph{MaxMean} meets the first condition (monotonicity), whereas
\emph{MaxMinSum} and \emph{MaxSumMin} do not, by Observations \ref{obs-MaxMinSum} and
\ref{obs-MaxSumMin}. Additionally, \emph{MaxSumSum} also satisfies the
monotonicity condition, although we omit the proof because it is trivial (If any one product is added to the
existing product set, the variety of new products by \emph{MaxSumSum} must increase because
all possible distances from the additional product to the existing products are added to the
existing variety).

\subsection{Linearity condition} \label{app:linearity}

\begin{figure}
    \centerline{\includegraphics[width=6cm]{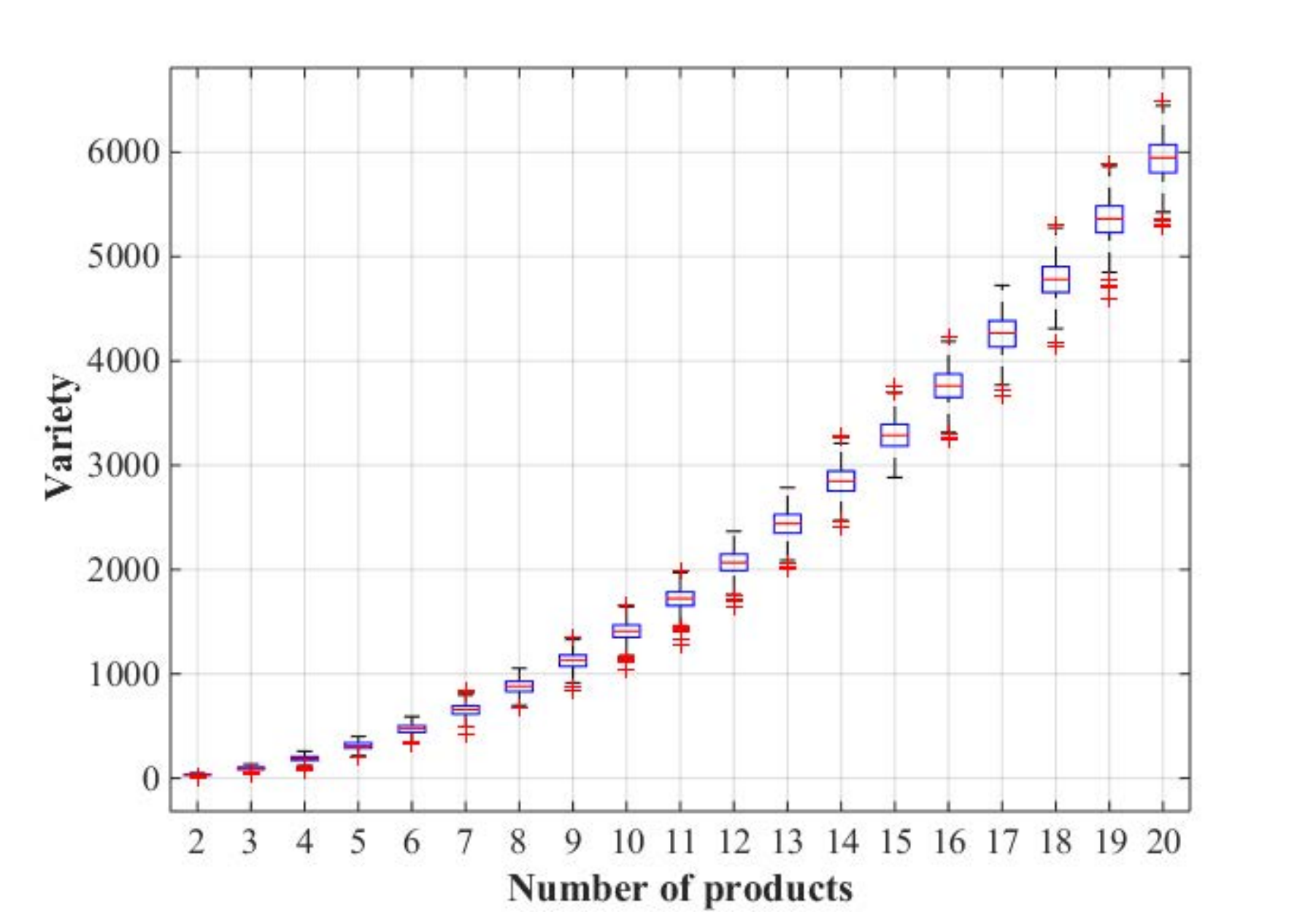}}
    \caption[Variety simulation on MaxSumSum]{Simulation results for variety by number of products -
    MaxSumSum.}
    \label{fig:simulation-MaxSumSum}
\end{figure}

\begin{figure}
    \centerline{\includegraphics[width=6cm]{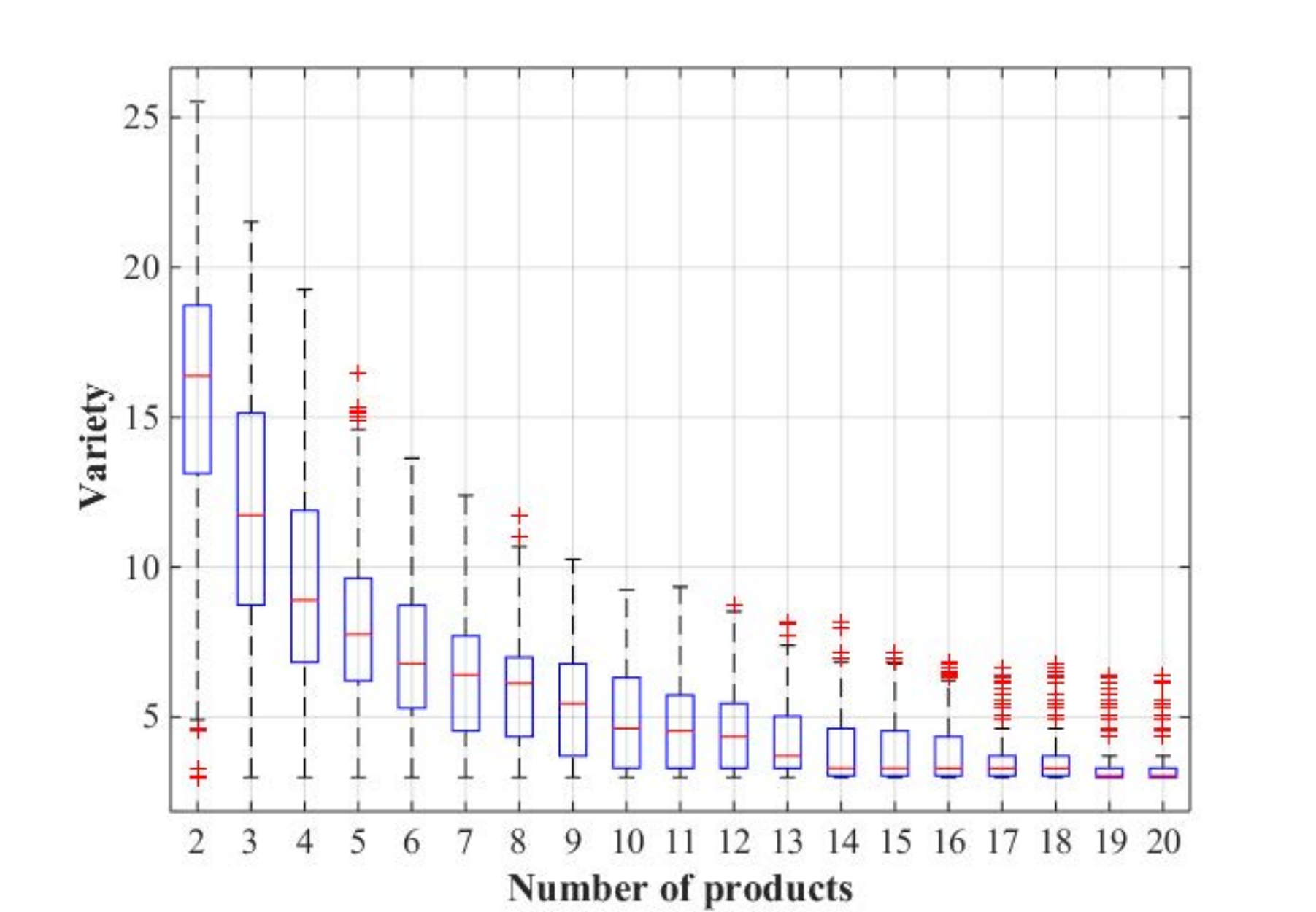}}
    \caption[Variety simulation on MaxMin]{Simulation results for variety by number of products -
    MaxMin.}
    \label{fig:simulation-MaxMin}
\end{figure}

\begin{figure}
    \centerline{\includegraphics[width=6cm]{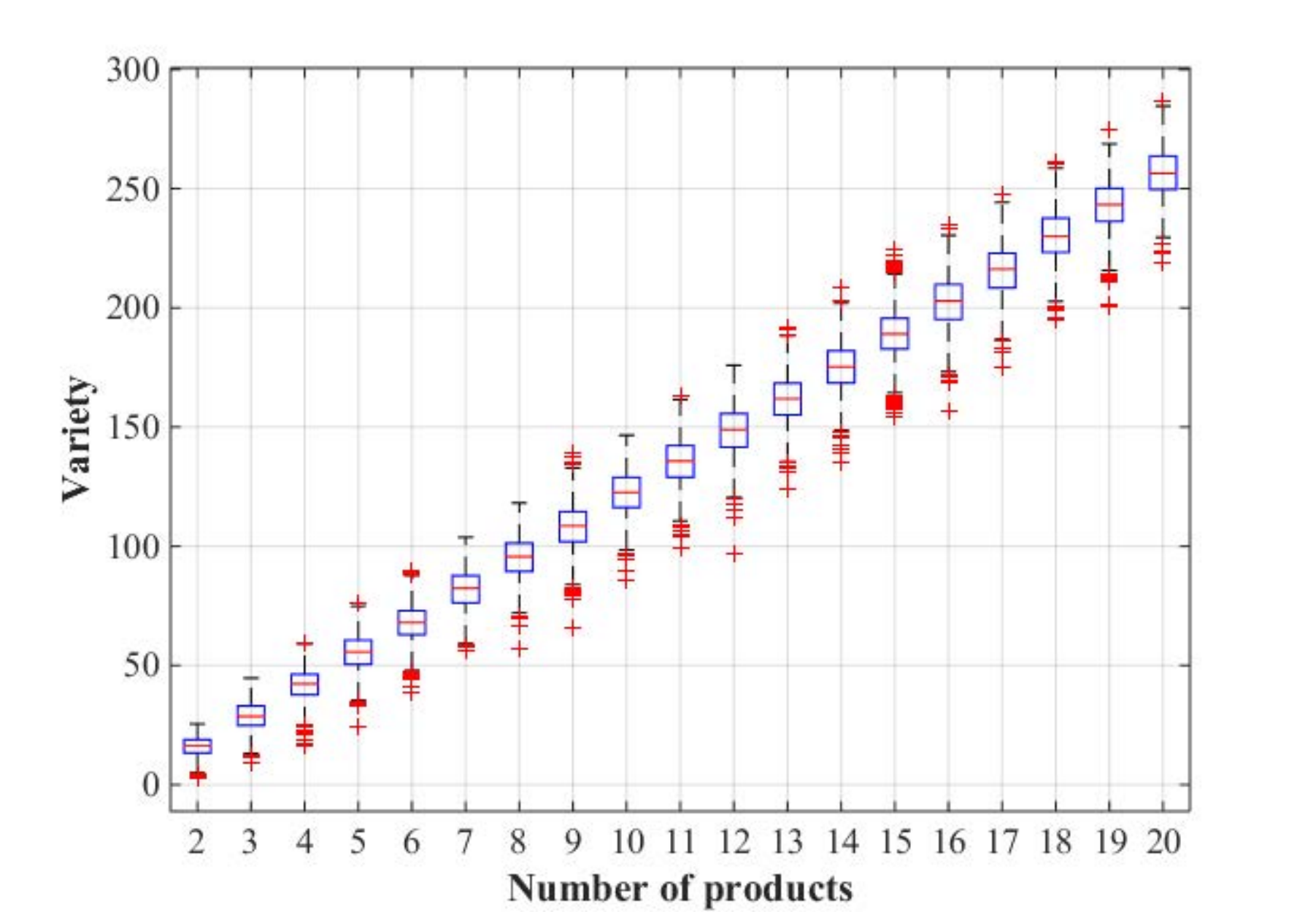}}
    \caption[Variety simulation on MaxMinSum]{Simulation results for variety by number of products -
    MaxMinSum.}
    \label{fig:simulation-MaxMinSum}
\end{figure}

\begin{figure}
    \centerline{\includegraphics[width=6cm]{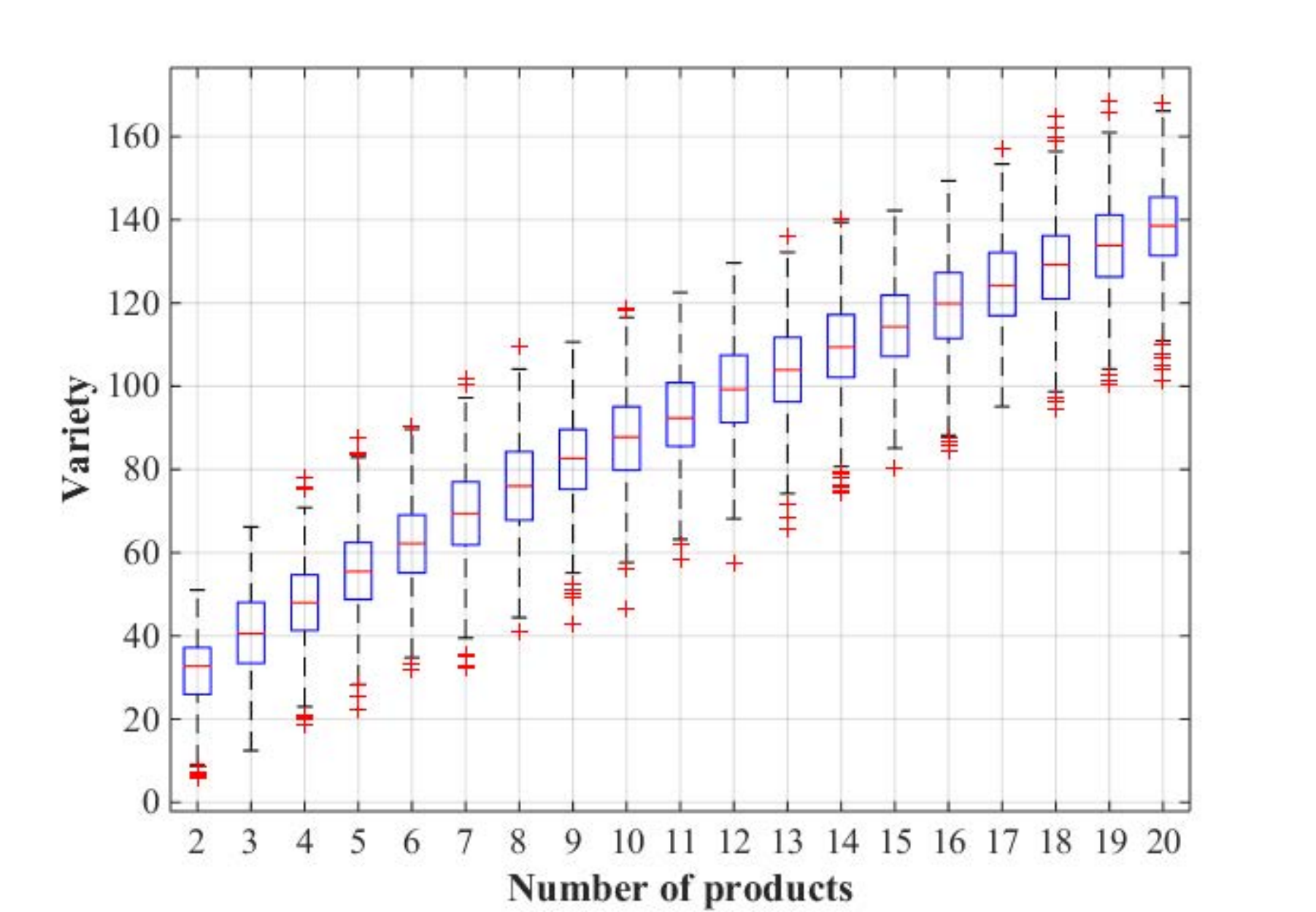}}
    \caption[Variety simulation on MaxSumMin]{Simulation results for variety by number of products -
    MaxSumMin.}
    \label{fig:simulation-MaxSumMin}
\end{figure}

\begin{figure}
    \centerline{\includegraphics[width=6cm]{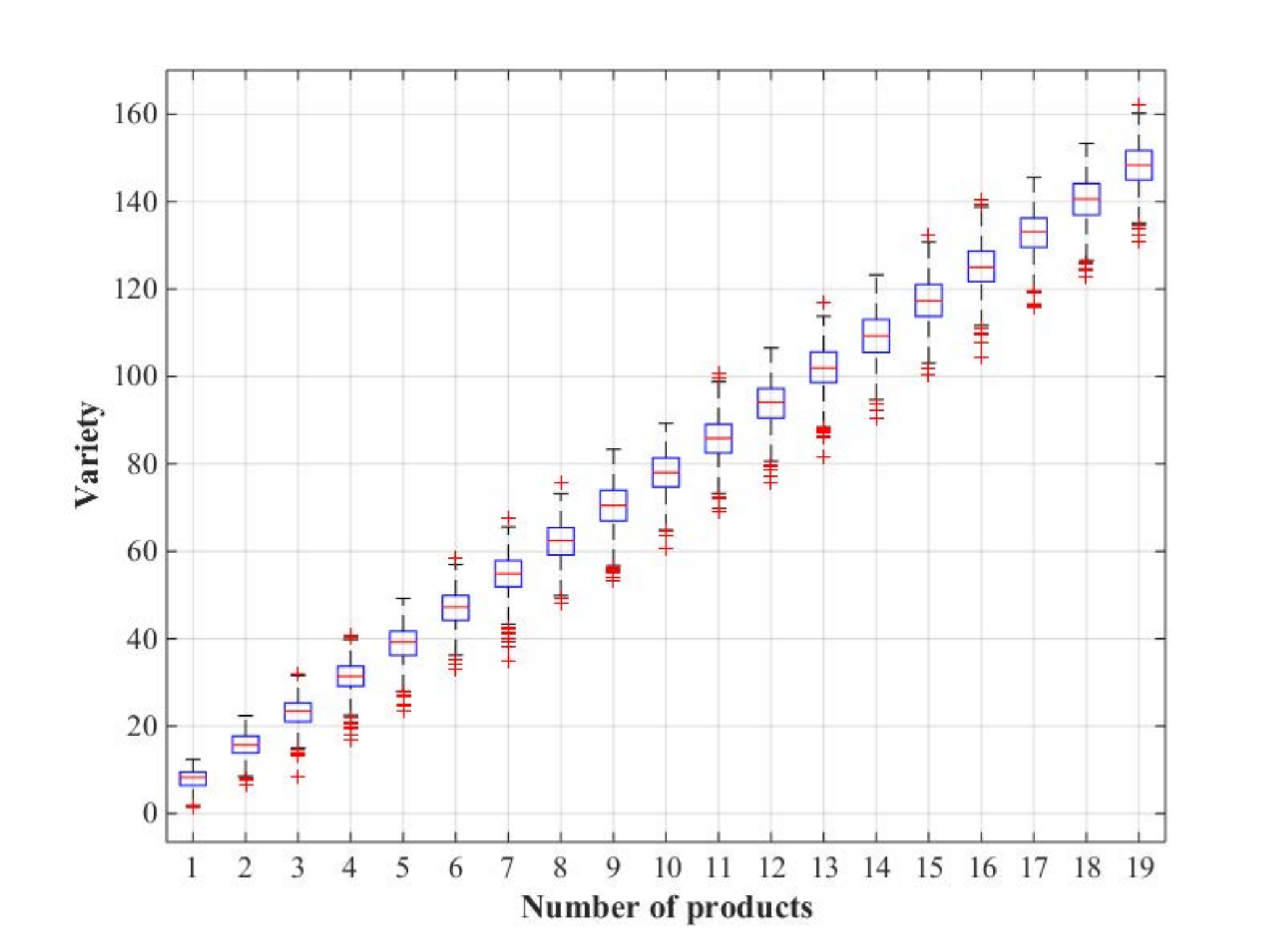}}
    \caption[Variety simulation on MaxMean]{Simulation results for variety by number of products -
    MaxMean.}
    \label{fig:simulation-MaxMean}
\end{figure}

For the second condition (linearity), we numerically investigate the behavior of the variety measures with respect
to the number of products in the product set. We use the product images of men's shirts from the 2017 S/S
of a fashion brand of the South Korean fashion retailer as the population of products. There are 35
different products and corresponding images. For each variety measure, we randomly select products
in the population with the given size of the product set and evaluate the variety of the product
set. We set the number of products in the product set to be from 2 to 20. For each number, the
product selection is repeated 1,000 times. Figures \ref{fig:simulation-MaxSumSum} to
\ref{fig:simulation-MaxMean} show the simulation results for the variety measures. In Figure
\ref{fig:simulation-MaxSumSum}, we can see that the variety measured by \emph{MaxSumSum} increases
exponentially in the number of products. This is not desirable because the measure is overly dependent
on the number of products and thus the visual differences between the products, which are important for
fashion products, may be overlooked. Figure \ref{fig:simulation-MaxMin} demonstrates that
\emph{MaxMin} is inappropriate because the variety decreases as the number of products increases,
which is counter-intuitive. In Figures \ref{fig:simulation-MaxMinSum},
\ref{fig:simulation-MaxSumMin}, and \ref{fig:simulation-MaxMean}, we can observe linear increases, proportional to the number of products, for \emph{MaxMinSum}, \emph{MaxSumMin}, and
\emph{MaxMean}, respectively. Therefore, we can state that \emph{MaxMinSum}, \emph{MaxSumMin}, and
\emph{MaxMean} satisfy the second condition.
Therefore, \emph{MaxMean} is the only variety measure that satisfies all
of the conditions required by the fashion variety model: it is non-decreasing when any
one product is added to the existing product set, and tends to be
linearly proportional to the number of distinct products in a set.

\section{Distribution optimization} \label{app:optimization}
The index sets, decision variables, and parameters of the optimization model are summarized below.

\textbf{Index sets}
\begin{itemize}
  \item [] $I$: target articles to be distributed to stores
  \item [] $S$: stores to which the articles are distributed
\end{itemize}

\textbf{Decision variables}
\begin{itemize}
  \item [] $x_{is}$: distribution quantity for article $i$ to store $s$
  \item [] $y_{is}$: binary variable: 1 if style $i$ is distributed to store $s$, or 0 otherwise
  \item [] $r_{s}$: reciprocal of the number of different styles distributed to store $s$
      $(1/\sum_i y_{is})$
  \item [] $u_{is}$: auxiliary variable to linearize the mean calculation
  \item [] $w_{ijs}$: auxiliary variable to linearize the variety calculation
  \item [] $v_{s}$: variety of distributed styles at store $s$
\end{itemize}

\textbf{Parameters}
\begin{itemize}
  \item [] $q_{s}$: ideal quantities for aggregated articles at store $s$
  \item [] $\alpha$: maximum allowable deviation of the actual distribution quantity from the ideal one
  \item [] $p_{i}$: planned total quantity for article $i$
  \item [] $m_{i}$: minimum distribution quantity for article $i$
  \item [] $M$: large number for the $y_{is}$ related constraint (set to $q_{s}$)
  \item [] $d_{ij}$: distance (dissimilarity) between article $i$ and article $j$
\end{itemize}

\begin{align}
\label{eqn:maxmean-obj} \mbox{Maximize   }& \sum_s v_{s} &&\\
\label{eqn:maxmean-c1}  \mbox{Subject to }& \sum_i x_{is} \leq (1+\alpha)q_{s} && \forall s \in S
\\
\label{eqn:maxmean-c2}                   & \sum_i x_{is} \geq (1-\alpha)q_{s} && \forall s \in S \\
\label{eqn:maxmean-c3}                   & \sum_s x_{is} \leq p_{i}           && \forall i \in I \\
\label{eqn:maxmean-c4}                   & x_{is} \geq m_{i}y_{is}            && \forall i \in I, s
\in S \\
\label{eqn:maxmean-c5}                   & x_{is} \leq M y_{is}               && \forall i \in I, s
\in S \\
\label{eqn:maxmean-c6}                   & \sum_i y_{is} \geq 2               && \forall s \in S \\
\label{eqn:maxmean-c7}                   & u_{is} \geq r_{s} + y_{is} - 1, \, u_{is} \leq r_{s}, \,
u_{is} \leq y_{is} && \forall i \in I, s \in S \\
\label{eqn:maxmean-c8}                   & \sum_i u_{is} = 1                  && \forall s \in S \\
\label{eqn:maxmean-c9}                   & w_{ijs} \geq r_{s} + y_{is} + y_{js} - 2 && \forall i,
j(>i)\in I, \, s \in S \\
\label{eqn:maxmean-c10}                  & w_{ijs} \leq y_{is}, \, w_{ijs} \leq y_{js}, \, w_{ijs} \leq r_{s} && \forall i,
j(>i)\in I, \, s \in S \\
\label{eqn:maxmean-c11}                  & v_{s} = \sum_i \sum_{j<i} d_{ij}w_{ijs}, && \forall s
\in S \\
\label{eqn:maxmean-c14}                  & w_{ijs} \geq 0      && \forall i,j(>i)\in I, 
s
\in S \\
\label{eqn:maxmean-c15}                  & u_{is} \geq 0      && \forall i \in I, s
\in S \\
\label{eqn:maxmean-c12}                  & x_{is} \geq 0, \mbox{integer}      && \forall i \in I, s
\in S \\
\label{eqn:maxmean-c13}                  & y_{is} \in \{0, 1\}                && \forall i \in I, s
\in S.
\end{align}

In the mathematical model, the objective function (\ref{eqn:maxmean-obj}) maximizes the sum of the
product varieties of the stores. Constraints (\ref{eqn:maxmean-c1}) and (\ref{eqn:maxmean-c2}) bound
the distribution quantity $x_{is}$ to be within the allowable range of $\alpha$\% deviation from
the desirable total quantity $q_{s}$. Constraint (\ref{eqn:maxmean-c3}) is the resource constraint, which
restricts the total distribution quantity of each product to its planned quantity $p_i$.
Constraints (\ref{eqn:maxmean-c4}) and (\ref{eqn:maxmean-c5}) ensure that $x_{is}$ is greater than or
equal to the minimum distribution quantity $m_i$ if product $i$ is distributed to store $s$.
Constraint (\ref{eqn:maxmean-c6}) is added because the variety measure is valid if at least two
different products are distributed. Constraints (\ref{eqn:maxmean-c7}), (\ref{eqn:maxmean-c8}),
(\ref{eqn:maxmean-c9}), (\ref{eqn:maxmean-c10}), and (\ref{eqn:maxmean-c11}) define the
\emph{MaxMean} variety. Constraints (\ref{eqn:maxmean-c14}), (\ref{eqn:maxmean-c15}), (\ref{eqn:maxmean-c12}) and (\ref{eqn:maxmean-c13}) condition
the decision variables.

\end{document}